\documentclass[a4paper, 10pt, english, twocolumn]{article}

\usepackage[utf8]{inputenc}
\usepackage[doi=true]{biblatex}
\usepackage{hyperref}
\usepackage{amsmath}
\usepackage{amsthm} % definitions
\usepackage{amsfonts} % fonts for math
\usepackage{amssymb} % math symbols
\usepackage{mathtools} % for macros
\usepackage{relsize}
\usepackage[shortlabels]{enumitem}
\usepackage{titlesec}
\usepackage[english]{babel}
\usepackage{graphicx}
\usepackage{float}

\usepackage{caption}
\captionsetup[table]{position=below}
\usepackage{url}
\usepackage{setspace}

\usepackage[linesnumbered,algoruled,boxed,lined]{algorithm2e}

\usepackage{subfig} % figures next to eachother

\usepackage{csquotes}

\usepackage{times}
\usepackage{verbatim}

\usepackage{titlesec}
\usepackage{enumitem}

\usepackage{footnote}
\usepackage{footmisc}

\usepackage[toc,page]{appendix}

\addbibresource{references.bib}
\DefineBibliographyStrings{english}{%
urlseen = {Retrieved},
}

\theoremstyle{definition}
\newtheorem{definition}{Definition}[section]

\newtheorem{theorem}{Theorem}

\addto\captionsenglish{% Replace "english" with the language you use
  \renewcommand{\contentsname}%
    {Table of Contents}%
}

\newcommand{\bs}[1][]{
    \ifthenelse{\equal{#1}{}}{>_{S_O}}{>_{S_#1}}
}

\newcommand{\multifunction}[4] {
        #1 = 
\begin{cases}
    #2,          & #3\\
    #4,          & \text{otherwise}
\end{cases}
}

% subsection and subsubsection shall have the same label as "section"
\AtBeginDocument{}
\AtBeginDocument{}
\AtBeginDocument{}

\newcounter{savefootnote}
\newcounter{symfootnote}
\newcommand{\symfootnote}[1]{%
   \setcounter{savefootnote}{\value{footnote}}%
   \setcounter{footnote}{\value{symfootnote}}%
   \ifnum\value{footnote}>8\setcounter{footnote}{0}\fi%
   \let\oldthefootnote=\thefootnote%
   \renewcommand{\thefootnote}{\fnsymbol{footnote}}%
   \footnote{#1}%
   \let\thefootnote=\oldthefootnote%
   \setcounter{symfootnote}{\value{footnote}}%
   \setcounter{footnote}{\value{savefootnote}}%
}

\baselineskip 12pt
\textheight 9in
\textwidth 6.5in
\oddsidemargin 0in
\topmargin 0in
\headheight 0in
\headsep 0in

\titlespacing*{\section}
{0cm}
{0.2cm}
{0cm}
\titlespacing*{\subsection}
{0cm}
{0.1cm}
{0cm}
\titlespacing*{\subsubsection}
{0cm}
{0cm}
{0cm}
\setlist[itemize]{nosep}
\setlist[enumerate]{nosep}

\begin{document}

\title{Ranking and benchmarking framework for sampling algorithms on synthetic data streams\symfootnote{This work was supported by the construction EFOP-3.6.3-VEKOP-16-2017-00002. The project was supported by the European Union, co-financed by the European Social Fund.}}

\author{
    József Dániel Gáspár 
    \and Martin Horváth
    \and Győző Horváth
    \and Zoltán Zvara
}

\date{June 17, 2020}

\maketitle
%\cleardoublepage
\begin{abstract}
    In the fields of \textbf{big data}, \textbf{AI}, and \textbf{streaming processing}, we work with large amounts of data from multiple sources.
    Due to memory and network limitations, we process data streams on distributed systems to alleviate computational and network loads.
    When data streams with non-uniform distributions are processed, we often observe overloaded partitions due to the use of simple hash partitioning.
    To tackle this imbalance, we can use dynamic partitioning algorithms that require a sampling algorithm to precisely estimate the underlying distribution of the data stream.
    There is no standardized way to test these algorithms.
    We offer an extensible ranking framework with benchmark and hyperparameter optimization capabilities and supply our framework with a data generator that can handle concept drifts.
    Our work includes a generator for dynamic micro-bursts that we can apply to any data stream.
    We provide algorithms that react to concept drifts and compare those against the state-of-the-art algorithms using our framework.

\end{abstract}
%\input{content/ack}
%\setcounter{tocdepth}{2}
%\tableofcontents
%\cleardoublepage
%\pagestyle{plain}
%\setcounter{page}{3}
\section{Introduction}
% ╔═════════════════════════════════════════════════════════════╗
% ║                          HISTORY                            ║
% ══════════════════════════════════════════════════════════════╝
The number of computers grows at a never before seen rate, mainly due to the inclusion of microchips in everyday household items, embedded devices, the availability of personal computers, and mobile devices.
Thus, our world is becoming more and more interconnected, creating vast networks of devices producing data at a large scale.

% ╔═════════════════════════════════════════════════════════════╗
% ║                       PARTITIONING                          ║
% ══════════════════════════════════════════════════════════════╝

We have to process this constantly generated data in a way that satisfies consumer needs, such as zero downtime and low latency.
Algorithms over the incoming data stream cannot be computed locally on one central server due to the bitrate of the stream \cite{Lam2012}; furthermore, requests could come in from all over the world.
Therefore, using some distributed system is necessary to provide the power to meet these demands \cite{Leonard1985}, however partitioning our computation raises interesting challenges.

Partitioning often occurs with simple hash functions \cite{spark, nasir2015}, which does not take the weights of the keys into account.
This only works well when the distribution of the incoming data stream follows a near-uniform distribution \cite{nasir2015}.

% ╔═════════════════════════════════════════════════════════════╗
% ║                        ZIPF's LAW                           ║
% ══════════════════════════════════════════════════════════════╝

However, requests over the internet do not follow a uniform distribution; instead, they follow Zip's law, and can, therefore, be fitted using a Zipfian distribution \cite{Adamic2002}.
This means that the majority of internet traffic may attribute to a small portion of the userbase.

Changing or non-uniform data distributions can cause overloaded partitions, and manually adjusting partitioners is infeasible due to the amount and frequency of incoming data \cite{Zliobaite2012}.
Unexpected and high loads in a distributed system can cause downtime, which entails significant losses in revenue.
Examples include Amazon's and Lowe's downtime during Black Friday \cite{business_insider_2018, cnn_business_2017}.

The base problem of partitioning is the NP-complete \textit{Bin Packing} problem's special case, the \textit{Multiprocessor Scheduling Problem} \cite{garey1979}.
Although it has a trivial polynomial-time solution, this would require tasks with the same lengths \cite{garey1979}.

For non-trivial cases, the problem requires the knowledge of every key's weight, which, in a distributed system, is a hard, if not nearly impossible task.

However, the problem can be alleviated by using the knowledge that the data follows Zip's law, therefore there exists a cutoff where the incoming requests are so sparse that it will not influence any partitioning algorithm significantly.
It becomes possible to count only a subset of the elements with the highest frequencies, called frequent elements problem.
Some of the earlier works researching the frequent elements problem were done by Alon \textit{et. al.} \cite{Alon1999} ; Henzinger \textit{et. al.} \cite{henzinger1998}; and Charikar \textit{et. al.} \cite{charikar2002}.

Algorithms that can precisely and cheaply estimate the frequencies of such elements are essential.
Stalling the data stream with heavy computations is infeasible, so solutions that can process each element in the stream in $O(1)$ time, query the most frequent items in sublinear time, and use sublinear space are well sought after \cite{Manku2002, metwally2005, demaine2002, misra1982, charikar2002, Cormode2005, Golab2003}.

% ╔═════════════════════════════════════════════════════════════╗
% ║                     Concept Drift                           ║
% ══════════════════════════════════════════════════════════════╝

A major problem in in data streaming is changing distribution, also called \textit{Concept drift}, which requires algorithms that can detect the changes without explicit knowledge about them \cite{Widmer1996}.
Failing to do so causes their accuracy to degrade \cite{Zliobaite2012}.
However overcorrection triggered by mild noise is also a problem, so balancing robustness and flexibility is key \cite{Stanley2003}.

% ╔═════════════════════════════════════════════════════════════╗
% ║                          BURST                              ║
% ══════════════════════════════════════════════════════════════╝
Data bursting happens when data packets arrive at their destination more rapidly than intended by the transmitter \cite{allman2005}.
The phenomenon of micro-bursting is quite common in networks with window-based transport protocols, especially TCP \cite{allman2005}.

Distributed systems (and most parts of the internet) are built on TCP, but even on non-window based protocols with physically far enough components, where packets are going to be transmitted over several jumps, data burst can occur.
Algorithms that are not designed to handle periodical micro-bursting could show significant inaccuracies in their results.

In this paper, our main contributions are:
    a data generator that can handle \textit{concept drifts} and \textit{data bursts};
    a mechanism to benchmark and rank sampling algorithms, and a baseline \textit{Oracle}, which correctly estimates the ground truth based on our data generator;
    formalization of \textit{concepts} and \textit{concept drifts} and proof of the correctness of the \textit{Oracle};
    a hyperparameter optimizer for sampling algorithms;
    two novel sampling algorithms that can react to \textit{concept drifts} and analysis of them in conjunction with state-of-the-art sampling algorithms using our framework.

\section{Related work}
\label{section:background}

\subsection{Sampling algorithms}

There are multiple algorithms that try to solve the problem of frequent items with fundamentally different approaches.
The common ground for all of these algorithms is that they have to carefully balance the memory usage, run time, and precision simultaneously.

These algorithms can be categorized into the following three categories: \textit{Counter-based}, \textit{Sketch-based} and \textit{Change respecting}.

\textit{Counter-based} algorithms rely on a counter to update the currently estimated frequencies of each key. Strategies exist to periodically flush or thin the counters to save memory.
Three defining algorithms of this category are:
\textit{Sticky Sampling} \cite{Manku2002}, \textit{Lossy Counting} \cite{Manku2002} and \textit{Space Saving} \cite{misra1982, demaine2002, metwally2005}.

\textit{Sketch-based} algorithms work by keeping fewer counters than their counter-based counterparts. Often using a probabilistic approach to keep estimations of the frequency in sub-linear space.
However they require expensive calculations when recording a key, so they cannot fit into the tight budget of a data streaming application.
Notable algorithms are: \textit{Count Sketch} \cite{charikar2002} and \textit{Count-Min Sketch} \cite{Cormode2005}.

\textit{Change respecting} algorithms have a mechanism in place to detect and conform to the concept drifts unlike static algorithms, which concentrate on either static data or a data stream with a non-changing distribution.
We are especially interested in this category, the two main algorithm here are \textit{Landmark} \cite{Zhu2002} and \textit{Frequent} \cite{Golab2003}.

% ╔═════════════════════════════════════════════════════════════╗
% ║                     Partitioning                            ║
% ══════════════════════════════════════════════════════════════╝

\subsection{Partitioning algorithms}

Hash partitioning works by taking the modulus of the hash of the data key with the number of partitions \cite{spark}.
This is a static partitioning algorithm that only solves the \textit{Multiprocessor Scheduling problem} with uniformly distributed keys.

In distributed systems, the number of partitions can change and migration costs cannot be ignored, as parts of the system could be physically apart.
Consistent hash is therefore commonly used \cite{karger1997, Gedik2014}.
Its major problem is that it has difficulties with non-uniform distributions \cite{Gedik2014}.

Gedik proposed three algorithms that consider non-uniform distributions as well \cite{Gedik2014}.
Therefore, we decided to use Gedik's algorithms in our tests.
Furthermore, these algorithms are constructed in such a way that both balance and migration costs can be configured easily.

% ╔═════════════════════════════════════════════════════════════╗
% ║                     CONCEPT DRIFT                           ║
% ══════════════════════════════════════════════════════════════╝
\subsection{Concept Drifts}

Widmer defines \textit{concept drift} as the radical change in a target concept introduced by changes in a hidden context \cite{Widmer1996}.
Studying \textit{concept drift} in data science is a fundamental building block to a good algorithm.
Internet traffic is ever-changing due to minor or major real-world events and drifts in the data will be inevitable.

A lot of research went into studying concept drifts, but the emphasis was placed on machine learning, with a categorizer on a finite number of categories.

The term \textit{concept drift} does not have a clear definition in the field of data streaming or it does not apply to the sampling problem.
It is often used interchangeably with the following terms: \textit{concept shift}, \textit{changes of classification}, \textit{fracture points}, \textit{fractures between data}, since these all refer to the same basic idea \cite{torres2012}.
Therefore, when we mention \textit{concept drift} or \textit{drifting data stream} we refer to the change of the underlying distribution of the stream.
For example, the cause of this change in the distribution could be a sporting event, a sale at an online store or sudden hardware failure at a major data centre.

\textit{Concept drift} can be sudden, also known as abrupt or instantaneous, where a change takes place suddenly; and gradual, where a transition period exists between two \textit{concepts} \cite{Stanley2003, tsymbal2004problem}.
Sudden \textit{concept drift} may occur because of critical failure at a major server park, while a gradual drift can occur every evening as different parts of the world start their day, while others go to sleep.
\textit{Concept drift} may also occur during any of our computations, which renders both static partitioning algorithms and static algorithms ineffective.

% ╔═════════════════════════════════════════════════════════════╗
% ║                          FRAME                              ║
% ══════════════════════════════════════════════════════════════╝
\subsection{MOA}

The closest framework to our proposed one is MOA \cite{moa}: 

%MOA
It was built for AI development and its main strengths are built around this idea.
It has \textit{concept drift} generation capabilities, but it was designed with classification in mind.

For a machine learning setting over data streams, the authors of MOA formulate the following requirements \cite{moa}:
\begin{itemize}
\small{
    \item Process an example at a time, and inspect it at most once
    \item Use a limited amount of memory
    \item Work in a limited amount of time
    \item Be ready to predict at any time
}
\end{itemize}
These requirements are also integrated in our requirements.

MOA has no burst generation capabilities, which would allow it to make more real-world-like scenarios.
It also does not offer an out-of-the-box solution for metadata generation.
Metadata would allow us to make experiments repeatable and could be used to make accurate error calculations.
Concept drifts are only loosely defined in the paper \cite{moa}, a formal definition would allow us to craft a precise error calculation method and prove its correctness.

\section{Definitions}
\label{section:defs}

% ╔═════════════════════════════════════════════════════════════╗
% ║                         DATA STREAMS                        ║
% ══════════════════════════════════════════════════════════════╝
\subsection{Data streams}

We define \textit{data streams} with a finite cartesian product over the set of keys $K$. A key could be anything, we define $K$ over the natural numbers ($K \subset \mathbb{N}$).

\begin{definition}[Data Stream]
    Given a key-set $K$ and $n \in \mathbb{N}^+$, let $str \in K^n$ be a \textit{data stream}.
\end{definition}

Although it is not the nature of the stream to have a random accessor to their items, for further definitions let $str_i$ be the $i$-th element of the stream.

% ╔═════════════════════════════════════════════════════════════╗
% ║                          CONCEPTS                           ║
% ══════════════════════════════════════════════════════════════╝
\subsection{Concepts}
\label{section:def_concepts}

\textit{Concepts} are probability distributions with a start and an end index, between which the \textit{concept} is considered active.

\begin{definition}[Concept]
    Let $Con \subset \mathbb{N}^+ \times \mathbb{N}^+ \times (\mathbb{N}^+ \rightarrow (K \rightarrow [0, 1]))$ be the set of all concepts, let $c = (c_s, c_e, c_f) \in Con$, if $c_s \leq c_e$ and $\forall i \in \mathcal{D}(c_f): c_f(i)$ is a discrete probability distribution over $K$, we call this $c$ a \textit{concept}.
\end{definition}

Given a stream $str \in K^n$ ($n \in \mathbb{N}^+$), let $Con^{str} \subset Con$ be the set of all \textit{concepts} for stream $str$.
We require exactly one \textit{concept} to be active at every point of the stream.
\[
   \forall i \in [1, 2, \dots, n], \exists! (c_s, c_e, c_f) \in Con^{str}: c_s \leq i \leq c_e
\]
    And require the \textit{concept}'s probability function to cover the range of the \textit{concept}.
\[
    \forall  (c_s, c_e, c_f) \in Con^{str}: \mathcal{D}(c_f) = [c_s, c_s + 1, \cdots, c_e]
\]

Given a stream $str \in K^n$ ($n \in \mathbb{N}^+$), $c = (c_s, c_e, c_f) \in Con^{str}$ and $\Delta \in ([0, 1] \times (K \rightarrow [0, 1]) \times (K \rightarrow [0, 1])) \rightarrow (K \rightarrow [0, 1])$ we study the following two types of $c_f$:
\begin{itemize}
    \item Constant \textit{concept}: $\exists P^c$ discrete probability distribution over $K$,
    $\forall i \in \mathcal{D}(c_f): c_f(i) = P^c$ 
    \item Changing \textit{concept}: if $c_e > c_s$ and $\exists P^c_1, P^c_2$ discrete probability distributions over $K$,
    $\forall i \in \mathcal{D}(c_f): c_f(i) = \Delta(\frac{i - c_s}{c_e - c_s}, P^c_1, P^c_2)$
\end{itemize}

Function $\Delta$ can be defined in multiple ways, we will provide further information on the specifics of how we use it in \autoref{section:generator}.

\begin{definition}[Concept Drift]
    Given a stream $str \in K^n$ ($n \in \mathbb{N}^+$),
    let $Con^{str} \subset Con$ be the set of all \textit{concepts} for stream $str$. \textit{concept drift} occurs in $i \in [1, 2, \dots, n-1]$ if,
\[
    \exists c = (c_s, c_e, c_f), c' = (c'_s, c'_e, c'_f): c_f(i) \neq c'_f(i+1)
\]
\end{definition}

\begin{definition}[Abrupt Concept Drift]
    Given two consecutive \textit{concepts}, $c = (c_s, c_e, c_f), c' = (c'_s, c'_e, c'_f), c_e + 1 = c'_s$, abrupt \textit{concept drifts} occur at $c'_s$ if $c_f(c_e) \neq c'_f(c'_s)$.
\end{definition}

Gradual \textit{concept drift} occurs between the start and end of a changing \textit{concept}.
\begin{definition}[Gradual Concept Drift]
    Given a stream $str \in K^n$ ($n \in \mathbb{N}^+$) and a changing \textit{concept} $c = (c_s, c_e, c_f) \in Con^{str}$ gradual \textit{concept drift} occurs during $[c_s, c_s + 1, \cdots , c_e]$
\end{definition}

\begin{figure}[H]
    \centering
    \includegraphics[width=0.7\linewidth]{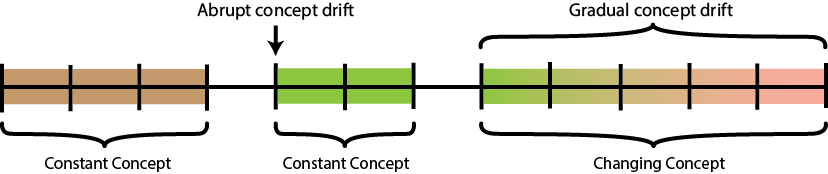}
    \caption{Concept drifts described by concepts}
    \label{fig:concepts_concept_drift}
\end{figure}

Certain combinations of $\Delta$ and range can be chosen for any changing \textit{concept} so that an abrupt drift could be defined.
We deem these as a misuse because a changing concept describes a gradual drift, which should happen over time, not suddenly.

Given all the \textit{concepts} that are acting on the stream, the true distribution can be defined at any location.

\begin{definition}[Concept and True Distribution]
\label{def:true_distr}
    Given a stream $str \in K^n$ ($n \in \mathbb{N}^+$), its \textit{concepts} $Con^{str} \subset Con$.
    At any point $i \in [1, 2, \dots, n]$ the true distribution of the stream $P^{i, str}_T$ is determined by the underlying \textit{concept}, $\exists! c = (c_s, c_e, c_f) \in Con^{str}): c_s \leq i \leq c_e$
\[
    P^{i, str}_T = c_f(i)
\]
\end{definition}

% ╔═════════════════════════════════════════════════════════════╗
% ║                           SAMPLING                          ║
% ══════════════════════════════════════════════════════════════╝
\subsection{Sampling algorithm}

Our problem is similar to the \textit{frequent items problem} \cite{Cormode2008}.

Beyond the most frequent items above a threshold, the frequency of those items is also necessary.

\begin{definition}[Sampling problem]
    Given a key-set $K$, $n \in \mathbb{N}^+$ and a \textit{data stream} $str \in K^n$, let the resulting most frequent items with their frequency be $F \in K \rightarrow \mathbb{N}^+$ with threshold $\phi \in [0\dots1]$ defined as
    \[
    \forall k \in K, (k, f_k) \in F \iff f_k \geq \phi \cdot n
    \]
    \[
    f_k= \sum_{i\in[1,2,\dots,n]} \mathbf{1}_k(str_i)
    \]
\end{definition}

Instead of $\phi$, a $k \in \mathbb{N}^+$ can be given, resulting in at most $k$ number of items with the highest frequencies.
This $k$ parameter is often called \textit{top-K} in the literature \cite{metwally2005}.

Algorithms that solve the sampling problem have to work with heavy limitations. Memory is limited, the run time has to be as low as possible, the stream can only be procesedded once, it's elements one-by-one and the length of the stream is hidden or non-existent.

The result of these algorithms are normalized to produce relative frequencies, we will call these \textit{sample distributions}.

% ╔═════════════════════════════════════════════════════════════╗
% ║                            ORACLE                           ║
% ══════════════════════════════════════════════════════════════╝
\subsection{Oracle}

\begin{definition}[Oracle]
An oracle is a sampling algorithm that has access to the \textit{data stream}'s metadata, which contains the \textit{concepts} and the $\Delta$ function that was used to generate that \textit{data stream}. It can trivially calculate the true distribution ($P^{i, str}_T$) using the method described in definition \ref{def:true_distr}.
\end{definition}

Using the oracle as a sampling algorithm in a real distributed system is of course impossible.

% ╔═════════════════════════════════════════════════════════════╗
% ║                      ERROR CALCULATION                      ║
% ══════════════════════════════════════════════════════════════╝
\subsection{Error calculation}

\subsubsection{Distribution difference}
\label{subsubsection:distribution_difference}
Directly measuring the error of an algorithm in isolation is key to determine how fast and how accurately it reacts to drifts compared to other algorithms.
We need the baseline distribution, provided by the \textit{oracle sampling}, to which we can compare all of the other distributions.

There are multiple ways, to compare probability distributions, but usually, either Kullback–Leibler divergence or Hellinger distance is used \cite{fink2009, Webb2016}.

Given $P$ target discrete distribution and $Q$ approximate discrete distribution the Hellinger distance is the following formula \cite{Hellinger1909}:
\begin{equation}
    H(P \| Q) = \frac{1}{\sqrt{2}} \cdot \sqrt{\mathlarger{\sum}_{x} (\sqrt{P(x)} - \sqrt{Q(x)})^2}
\end{equation}

Hellinger distance is bounded between $0$ and $1$. It is a distance function (metric), and unlike Kullback–Leibler divergence, it does satisfy the rule of symmetry and triangle inequality.

\subsubsection{Load imbalance}

A partitioning problem can be traced back to the \textit{Multiprocessor scheduling problem} originally defined by Garey \textit{et. al.} in 1979 \cite{garey1979}. It is NP-complete, but pseudo-polynomial solutions exist for any fixed $m$. \cite{garey1979}.

When measuring the error of a whole system we use the most commonly used metric, the \textit{percent imbalance} metric. It measures the lost performance to the imbalanced load, or in other words the performance that could be gained by balancing our partitions \cite{pearce2012}.
Pearce \textit{et. al.} (2012) \cite{pearce2012} defined this imbalance as follows:

    \begin{definition}[Imbalance]
        Let $L$ be the loads on our partitions, then $L_{max}$ is the maximum load on any process. Let $\bar{L}$ be the mean load over all processes. The \textit{percent imbalance} metric for $L$, $\lambda_L$, is:
        
        \begin{equation}
            \lambda_L = \left( \frac{L_{max}}{\bar{L}} - 1 \right) \cdot 100%
        \end{equation}
    \end{definition}
    
    Calculating the $L_{max}$ of the loads is sufficient because that will determine the runtime of our whole computation.

\section{Our Framework}
\label{section:frame}
To test sampling algorithms we need a 
flexible, fast and deterministic
testbed
with low overhead.

We list our findings of what the requirements should be for a ranking and benchmarking framework for algorithms over data streams.
We base these requirements on previous frameworks \cite{moa} and our own experience while developing sampling algorithms.

\begin{itemize}
    \item Allow algorithms to only \textit{process the data stream at most once} \cite{moa}.
    \item Provide \textit{metrics} about the algorithms.
    \item Provide those metrics \textit{on-demand} during the computation \cite{moa}.
    \item Be \textit{Fast} with low overhead \cite{moa}.
    \item Be \textit{Deterministic}.
    \item Include a \textit{data generator} that can
    \begin{itemize}
        \item handle \textit{concept drifts};
        \item allow \textit{pre-generation} of \textit{data streams};
        \item generate the \textit{metadata} of pre-generated streams;
        \item be capable of simulating \textit{micro-bursting} during tests.
    \end{itemize}
    \item Offer \textit{Hyper-parameter optimization}.
\end{itemize}

\subsection{Data generator}
\label{section:generator}

\subsubsection{Concept drift generation}
It makes more sense to define \textit{concepts} and then \textit{concept drifts} as their consequence.
However, for generating \textit{concept drifts} it is more straightforward to define the points at which \textit{concept drifts} are occurring and derive the \textit{concepts} from those.

Let $D \subset \mathbb{N}^+_0 \times \mathbb{Q}^+ \times (K \rightarrow [0,1]) \times (K \rightarrow [0,1])$, $(len, mid, P_1, P_2) \in D$ is a \textit{drift} if
($mid \in \mathbb{N}^+ \land len \mid 2$) $\lor$ ($mid + \frac{1}{2} \in \mathbb{N}^+ \land len \nmid 2$), where $len$ is the length, $mid$ is the midpoint of the drift and $P_1, P_2$ are the two probability distributions of the two \textit{concepts} that the \textit{drift} is created by.

In further definitions $\forall d = (len, mid, P_1, P_2) \in D$ let $s \in D \rightarrow \mathbb{N}^+: s(d) = mid - \frac{len}{2}$ and $e \in D \rightarrow \mathbb{N}^+: e(d) = mid + \frac{len}{2}$.

We have exactly one concept active at any given point, so overlapping \textit{drifts} should not be allowed either.
Given $str \in K^n (n \in \mathbb{N}^+)$ and $D^{str} \subset D$, the \textit{drifts} for that \textit{data stream}: $\forall d, d' \in D^{str}, d \neq d': $
\[
    e(d) \leq e(d') \implies
    e(d) < s(d')
\]

There also has to be an initial \textit{drift} starting at index $1$, 
\[
    \exists d \in D^{str}: s(d) = 1
\]

For a $str \in K^n$, ($n \in \mathbb{N}^+$) stream we call $d = (len, mid, P_1, P_2) \in D$ a gradual \textit{drift}, if $len > 0$, $s(d) >= 1$ and $e(d) <= n$.

For a $str \in K^n$, ($n \in \mathbb{N}^+$) stream we call $d = (0, mid, P_1, P_2) \in D$ an abrupt \textit{drift}, if $s(d) >= 1$ and $e(d) <= n$.

Given a stream $str \in K^n (n \in \mathbb{N}^+)$ generated by $D^{str} \subset D$ drifts, $\forall i \in [1, 2, \dots, n]$, let $str_i$ be a random variable following $P^{i, str}_G$ discrete distribution, where $P^{i, str}_G$ is
\begin{enumerate}
    \item $\exists d = (d_{len}, d_{mid}, d_{P_1}, d_{P_2}) \in D^{str}: len > 0 \land s(d) \leq i \leq e(d)$ then,
    \[
        P^{i, str}_G := \Delta(\frac{(i - s(d))}{d_{len}}, d_{P_1}, d_{P_2})
    \]
    \item $\exists d = (0, d_{mid}, d_{P_1}, d_{P_2}) \in D^{str}: s(d) = i$ then,
    \[
        P^{i, str}_G := d_{P_2}
    \]
    \item $\exists d = (l, m, d_{P_1}, d_{P_2}) \in D^{str}:$ 
    $e(d) < i
        \land
    \nexists d' \in D^{str}: e(d) < s(d') \leq i$
    \[
        P^{i, str}_G := d_{P_2}
    \]
\end{enumerate}

To make sure that these drift definitions are compliant with the \textit{concepts} defined in \autoref{section:def_concepts}, we prove that their expressive powers are equal (\autoref{thm:dr_co_eq_I}, \autoref{thm:dr_co_eq_II}).
Therefore the same \textit{concept drifts} can be described by them.

\begin{theorem}[Drifts and concepts are equal in expressive power I.]
    \label{thm:dr_co_eq_I}
    $\forall str \in K^n$ and $Con^{str} \subset Con$: $\exists D^{str} \subset D$ drifts,
    where $|D^{str}| = |Con^{str}|$ and
    $\forall i \in [1,2, \dots, n]: P_G^{i, str} = P_T^{i, str}$
\end{theorem}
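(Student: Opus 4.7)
The plan is to build $D^{str}$ by attaching a single drift to each concept in $Con^{str}$, and then check that the generator's three rules reproduce the true distribution at every index. Because the map from concepts to drifts is a bijection by construction, $|D^{str}| = |Con^{str}|$ follows automatically.

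For each $c = (c_s, c_e, c_f) \in Con^{str}$ I would split on its type. If $c$ is constant with $c_f \equiv P^c$, I attach the abrupt drift $d_c = (0, c_s, P^c, P^c)$, which occupies the single index $c_s$. If $c$ is changing with $c_f(i) = \Delta\!\left(\tfrac{i - c_s}{c_e - c_s}, P^c_1, P^c_2\right)$, I attach the gradual drift $d_c = \left(c_e - c_s,\ \tfrac{c_s + c_e}{2},\ P^c_1,\ P^c_2\right)$, so that $s(d_c) = c_s$ and $e(d_c) = c_e$. A short check then confirms the integer/half-integer parity requirement on $(len, mid)$ in each case (for the gradual case, $mid$ is a half-integer precisely when $c_e - c_s$ is odd), and the range condition $1 \leq s(d_c) \leq e(d_c) \leq n$ holds because $[c_s, c_e] \subseteq [1, n]$.

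Next I would verify the structural constraints on $D^{str}$. The non-overlap condition follows directly from the fact that $Con^{str}$ partitions $[1, n]$: for distinct concepts $c, c'$ with $c_e \leq c'_e$ we have $c_e < c'_s$, and since $[s(d_c), e(d_c)] \subseteq [c_s, c_e]$ for each $c$, the corresponding drifts inherit the same separation. The existence of a drift with $s(d) = 1$ follows from the unique concept covering index $1$, which must satisfy $c_s = 1$.

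Finally, I would establish $P_G^{i, str} = P_T^{i, str}$ pointwise by a case analysis on the concept $c$ covering $i$. If $c$ is changing, then $i \in [s(d_c), e(d_c)]$ and $d_{c, len} > 0$, so rule~1 yields $P_G^{i, str} = \Delta\!\left(\tfrac{i - c_s}{c_e - c_s}, P^c_1, P^c_2\right) = c_f(i)$. If $c$ is constant, then either $i = c_s$, triggering rule~2 with output $P^c$, or $c_s < i \leq c_e$, triggering rule~3 with $d_c$ as the most recent completed drift (the next assigned drift starts at $c_e + 1 > i$), again producing $P^c$. The only subtle point is this last step: I must confirm that rule~3 genuinely selects $d_c$ rather than some other already-completed drift, which follows from non-overlap together with the contiguity of the concept partition. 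Apart from that, the verification is a routine case analysis.
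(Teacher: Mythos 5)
Your construction is essentially identical to the paper's: a gradual drift $(c_e - c_s, \tfrac{c_s+c_e}{2}, P^c_1, P^c_2)$ for each changing concept and an abrupt drift at $c_s$ for each constant concept (the paper uses an arbitrary first distribution $P^c_x$ where you reuse $P^c$, which is immaterial since rules 2 and 3 only read $d_{P_2}$), followed by the same case analysis on the generator's three rules. Your version is in fact slightly more careful than the paper's, since you explicitly verify the parity condition on $(len, mid)$, the non-overlap of the constructed drifts, and the existence of an initial drift at index $1$, all of which the paper's appendix proof leaves implicit.
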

\begin{proof}
    We construct such $D^{str}$, $\forall c = (c_s, c_e, c_f) \in Con^{str}$ and show that the construction is correct.
    Then for any $i \in [1, 2, \cdots, n], \exists! c = (c_s, c_e, c_f) \in Con^{str}: c_s \leq i \leq c_e$ (based on requirements for $Con^{str}$). We show that for any $c$ concept, our constructed corresponding $d$ drift is correct by showing that $P^{i, str}_G = P^{i, str}_T$.
    Please see \autoref{appendix:section:equal_I} for the whole proof.
\end{proof}

\begin{theorem}[Drifts and concepts are equal in expressive power II.]
    \label{thm:dr_co_eq_II}
    $\forall str \in K^n$ and $D^{str} \subset D$: $\exists Con^{str} \subset Con$ concepts,
    where $\forall i \in [1,2, \dots, n]: P_T^{i, str} =  P_G^{i, str}$
\end{theorem}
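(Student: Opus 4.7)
The plan is to invert the construction of \autoref{thm:dr_co_eq_I}: starting from the drifts $D^{str}$, I explicitly build a concept set $Con^{str}$ whose concept functions realize the three-case definition of $P_G^{i,str}$.

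First, I order the drifts as $d_1, d_2, \ldots, d_m$ by starting position; this is well-defined because drifts are pairwise disjoint, and by hypothesis there is an initial drift with $s(d_1) = 1$. For each $d_j = (len_j, mid_j, P_1^j, P_2^j)$ I introduce into $Con^{str}$ at most two concepts. If $len_j > 0$, I add a changing concept spanning $[s(d_j), e(d_j)]$ with $c_f(i) = \Delta((i - s(d_j))/len_j, P_1^j, P_2^j)$, mirroring case~1 of the $P_G$ definition. In addition, if the interval between the end of $d_j$'s support and the start of $d_{j+1}$ (or $n$ if $j = m$) is non-empty, I add a constant concept on that interval with distribution $P_2^j$; this covers cases~2 and 3 of the $P_G$ definition. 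For an abrupt drift ($len_j = 0$) the first concept is omitted, and the constant concept begins at $s(d_j) = mid_j$ itself.

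Next, I verify the well-formedness of $Con^{str}$: each $c_f$ is a discrete probability distribution (trivially for constant concepts, and for changing concepts by the assumed properties of $\Delta$ on probability distributions); the supports of the constructed concepts are pairwise disjoint by the non-overlap condition on drifts; and they jointly cover $[1, n]$, so exactly one concept is active at each index. A direct case analysis then shows that for every $i \in [1, n]$, the active concept's $c_f(i)$ equals $P_G^{i,str}$ under the corresponding case of the definition, whence $P_T^{i,str} = P_G^{i,str}$ by \autoref{def:true_distr}.

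The main obstacle is the bookkeeping for the boundary indices: handling abrupt drifts, where the ``before-and-after'' of the drift collapses and the constant concept must include the drift's own index; gradual drifts that touch the next drift with no gap, so no intermediate constant concept is needed; and whether the final drift terminates at $n$ or earlier. Once these cases are enumerated carefully, the correspondence with $P_G$ is essentially automatic, making the core of the argument a direct verification rather than a substantive difficulty.
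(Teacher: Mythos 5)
Your construction is essentially identical to the paper's: a changing concept with $c_f(i) = \Delta(\frac{i - s(d)}{d_{len}}, d_{P_1}, d_{P_2})$ on the support of each gradual drift, a constant concept with distribution $d_{P_2}$ filling the gap up to the next drift (or $n$), and for abrupt drifts a constant concept starting at $s(d)$ itself, followed by the same case-by-case verification that the active concept's $c_f(i)$ matches the corresponding clause of $P_G^{i,str}$. The approach and the boundary bookkeeping match the paper's appendix proof; your explicit well-formedness check (disjointness and coverage of the constructed concepts) is a point the paper treats more implicitly, but it is the same argument.
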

\begin{proof}
    We begin by constructing such $Con^{str}$, $\forall d = (d_{len}, d_{mid}, d_{P_1}, d_{P_2}) \in D^{str}$,
    then for any $i \in [1, 2, \cdots, n]$, we show that for any $d$ drift, our constructed corresponding $c$ concept is correct by showing that $P^{i, str}_T = P^{i, str}_G$.
    Please see \autoref{appendix:section:equal_II} for the whole proof.
\end{proof}

Both the generator and the concepts depend on a $\Delta$ function.
To allow the most generic way of defining drifts we left $\Delta$ undefined up until this point.
For performance reasons, we only approximate a linear interpolation with the following $\Delta$ function, and in \autoref{thm:oracle_correct} we prove the correctness of this approximation.

Let $\Delta \in ([0, 1] \times (K \rightarrow [0, 1]) \times (K \rightarrow [0, 1])) \rightarrow (K \rightarrow [0, 1]): \forall p \in [0, 1]$
\[
    \multifunction{\Delta(p, P_1, P_2)}
    {P_1}
    {\text{if } R > p}{P_2}
\]
where $R$ is a random variable following standard uniform distribution.

We use this $R$ to avoid recalculating a new distribution for every generated item.

To see smooth transitions in the generator's implementation we also require that the probability distribution at the final point of a drift matches the probability distribution at the first point of the following drift.
To achieve this we require the drifts to comply with the following restriction:

Let $str \in K^n$ ($n \in \mathbb{N}^+$) be a \textit{data stream} and $D^{str}$ its concepts.
\[
    \forall d = (d_{len}, d_{mid}, d_{P_1}, d_{P_2}) \in D^{str}:
\]
\begin{gather*}
    \exists d' = (d'_{len}, d'_{mid}, d'_{P_1}, d'_{P_2}) \in D^{str}, \\ d'_{mid} > d_{mid}
    \implies
    d_{P_2} \equiv d'_{P_1}
\end{gather*}

Using the aforementioned $\Delta$ it can be shown that this is sufficient precondition.

\subsubsection{Dynamic burst generation}

When simulating bursts we simulate faulty routers.
Before beginning every \textit{micro-batch}, a burst has a chance to start.
During the duration of the burst, the keys have a certain probability that they will be held back.
At the end of the burst, which could last a couple of micro-batches, the keys are released back into the stream at once.

Bursts can only be applied to already existing streams, for this, it is a prerequisite for a burst process to already have a \textit{data stream} prepared.
We will use the phrase \textit{loading} to describe the process of getting the next item of this stream.

Bursts are defined with the following structure: $B \subset [0, 1] \times [0, 1] \times \mathbb{N}^+ \times \mathbb{N}^+$. $b = (bsp, kbp, bl_{min}, bl_{max}) \in B$ is a valid burst configuration, if $bwl_{min} \leq bwl_{max}$.

\textit{Burst start probability} ($bsp$) describes the probability of a burst starting at any micro-batch.
Once the burst started, another one cannot start again.
Before any bursting could take place, the faulty keys are calculated in advance, with \textit{Key burstiness probability} ($kbp$) that gives the probability of each key being held up.
This process generates a map (\textit{Faulty Keys Map} - $FKM$) in which we can store these keys until the burst is over.

During the burst, the faulty keys are counted in $FKM$.
Whenever a key that is in $FKM$ would be \textit{loaded} into the stream as a next item, instead will be put into the $FKM$.
If a key is not faulty, then the stream loading can continue as usual.

The burst duration is defined by \textit{Burst length} in micro-batches, which is a random number between the minimum ($bl_{min}$) and maximum ($bl_{max}$) duration.
At the end of the burst, the held-up data in $FKM$ is released back into the stream.

\subsection{Ranking}
\label{section:ranking}

We use on-demand metric querying to archive the ranking of sampling algorithms.
We show two different ways to compare them.
One method works with the direct sampling outputs, while the other method indirectly measures them in a simulated environment by the efficiency of that simulated system.

\subsubsection{Ranking by reported distribution difference}
Sampling algorithms can be measured directly by the samples described in \autoref{subsubsection:distribution_difference}.

To achieve this, we run the algorithms in separate processes and provide them the same \textit{data stream}.
At the end of each micro-batch, the samples of the algorithms are saved and the sample distributions are calculated from them.
We run an \textit{Oracle} on the same \textit{data stream} to provide the baseline, the true distribution, at the end of each micro-batch.

The \textit{Oracle} is a good choice for this, because it can correctly calculate the true distributions even during concept drifts (\autoref{thm:oracle_correct}).
Seeing the difference increase between a sample distribution and the true distribution during concept drifts means the sampling algorithm could not react to the drift fast enough.
After this, the speed at which the difference shrinks tells us how fast it can detect and correct the drift.
The \textit{Oracle}'s correctness is based on the synthetic data generator.
We showed that the data generator can correctly generate concept drifts in \autoref{thm:dr_co_eq_I} and \autoref{thm:dr_co_eq_II}.
Therefore, it is enough to show that the oracle correctly calculates the true distribution based on the data generator.
We have access to all of the drifts $D^{str} \subset D$ from the metadata, which was used to generate $str \in K^n$ \textit{data stream}.

Because there is always at least one drift present (the starting drift) and no two drifts can overlap at any given point, for all $i \in [1, n]$, $P^{i, str}_O$ is:
\begin{enumerate}
    \item $\exists d = (d_{len}, d_{mid}, d_{P_1}, d_{P_2}) \in D^{str}: d_{len} > 0 \land s(d) \leq i \leq e(d)$ then,
    \[
        P^{i, str}_O := (1-\frac{(i - s(d))}{d_{len}}) \cdot d_{P_1} + \frac{(i - s(d))}{d_{len}} \cdot d_{P_2}
    \]
    \item $\exists d = (0, d_{mid}, d_{P_1}, d_{P_2}) \in D^{str}: s(d) = i$ then,
    \[
        P^{i, str}_O := d_{P_2}
    \]
    \item  $\exists d = (l, m, d_{P_1}, d_{P_2}) \in D^{str}: e(d) < i
        \land
    \nexists d' \in D^{str}: e(d) < s(d') \leq i$
    \[
        P^{i, str}_O := d_{P_2}
    \]
\end{enumerate}

\begin{theorem}[Oracle is correct]
    \label{thm:oracle_correct}
    Given $str \in K^n$ ($n \in \mathbb{N}^+$) stream generated by $D^{str} \subset D$ drifts.
\[
    \forall i \in [1, 2, \dots, n] : P^{i, str}_O \textit{ correctly estimates } P^{i, str}_G
\]
\end{theorem}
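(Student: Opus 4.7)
The plan is to proceed by case analysis that exactly mirrors the three cases used to define both $P^{i, str}_O$ and $P^{i, str}_G$. Since the Oracle has access to all drifts $D^{str}$ from the metadata, and since by construction the drifts are non-overlapping with a starting drift at index~$1$, the three cases form a partition of $[1, n]$, and the same case applies to the Oracle and to the generator at every index~$i$. It therefore suffices to compare the two distributions case by case.

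Cases~2 and~3 are immediate. In each, the relevant drift $d$ satisfies $d_{len} = 0$ (abrupt) or $i$ lies past the end of the most recent drift, and both the generator's and the Oracle's definitions assign the same value $d_{P_2}$; no estimation is required and equality is pointwise. Case~1 (inside a gradual drift with $d_{len} > 0$) carries the whole content of the theorem. Here $P^{i, str}_G := \Delta(p_i, d_{P_1}, d_{P_2})$ with $p_i := (i - s(d))/d_{len} \in [0,1]$, so by the definition of $\Delta$ the generator draws $str_i$ from $d_{P_1}$ on the event $\{R > p_i\}$ and from $d_{P_2}$ on the complement. Thus $P^{i, str}_G$ is a \emph{random} distribution depending on the uniform variable~$R$, whereas $P^{i, str}_O$ is a deterministic convex combination; ``correctly estimates'' should be read as ``equals the true marginal law of $str_i$ after marginalising over $R$''.

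The decisive step is then a direct application of the law of total probability. For any key $k \in K$,
\begin{align*}
\Pr(str_i = k)
&= \Pr(R > p_i)\, d_{P_1}(k) + \Pr(R \leq p_i)\, d_{P_2}(k) \\
&= (1 - p_i)\, d_{P_1}(k) + p_i\, d_{P_2}(k),
\end{align*}
because $R$ is uniform on $[0,1]$. Substituting $p_i = (i - s(d))/d_{len}$ recovers exactly the Oracle's case-1 formula, so $P^{i, str}_O(k) = \Pr(str_i = k)$ for every $k$, which is the sense in which the Oracle's deterministic distribution correctly estimates the generator's randomised one.

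The main obstacle is conceptual rather than computational: pinning down the meaning of ``correctly estimates'' when one side is a random distribution and the other is deterministic. A secondary subtlety, flagged in the text, is that the same realisation of $R$ is reused across all indices within a single drift. This introduces dependence between the $str_i$'s inside the drift, but it does \emph{not} affect the marginal at any fixed~$i$, so the computation above remains valid and the three-case comparison completes the proof.
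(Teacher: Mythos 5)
Your proof is correct and follows essentially the same route as the paper's: the same case split on whether $i$ lies inside a gradual drift, with the degenerate cases dismissed by definition and the gradual case resolved by averaging over the uniform variable $R$. Your pointwise law-of-total-probability computation of $\Pr(str_i = k)$ is exactly the paper's computation of $E[P^{i,str}_G] = (1-p)\,d_{P_1} + p\,d_{P_2}$, just stated with a cleaner reading of ``correctly estimates'' as equality with the marginal law of $str_i$.
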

\begin{proof}
There are two cases for each $i \in [1, 2, \dots, n]$:
\begin{enumerate}
    \item $\exists d = (d_{len}, d_{mid}, d_{P_1}, d_{P_2}) \in D^{str}: d_{len} > 0 \land s(d) \leq i \leq e(d)$
    
    let $p = \frac{(i - s(d))}{d_{len}}$, the generator defines $P^{i, str}_G$ to be
    \[
        P^{i, str}_G := \Delta(\frac{(i - s(d))}{d_{len}}, d_{P_1}, d_{P_2})
    \]
    we provided the generator with the following $\Delta$ function,
    \[
        \multifunction{\Delta(p, d_{P_1}, d_{P_2})}
        {d_{P_1}}
        {\text{if } R > p}{d_{P_2}}
    \]
    where $R$ is a random variable following standard uniform distribution.
    \begin{equation*}
        \begin{split}
        P(P^{i, str}_G & = d_{P_1}) = P(R > p) = (1 - p) \\ 
        &= d_{P_2}) = P(R \leq p) = p
        \end{split}
    \end{equation*}
    \begin{gather*}
    \end{gather*}
    therefore, the expected value of $P^{i, str}_G$,
    \[
        E[P^{i, str}_G] = (1 - p) \cdot d_{P_1} + p \cdot d_{P_2} = P^{i, str}_O
    \]
    \item otherwise, $P^{i, str}_O \equiv P^{i, str}_G$ by definition.
\end{enumerate}

\end{proof}

\subsubsection{Ranking by load imbalance}

\begin{figure}[H]
    \centering
    \includegraphics[width=0.7\linewidth]{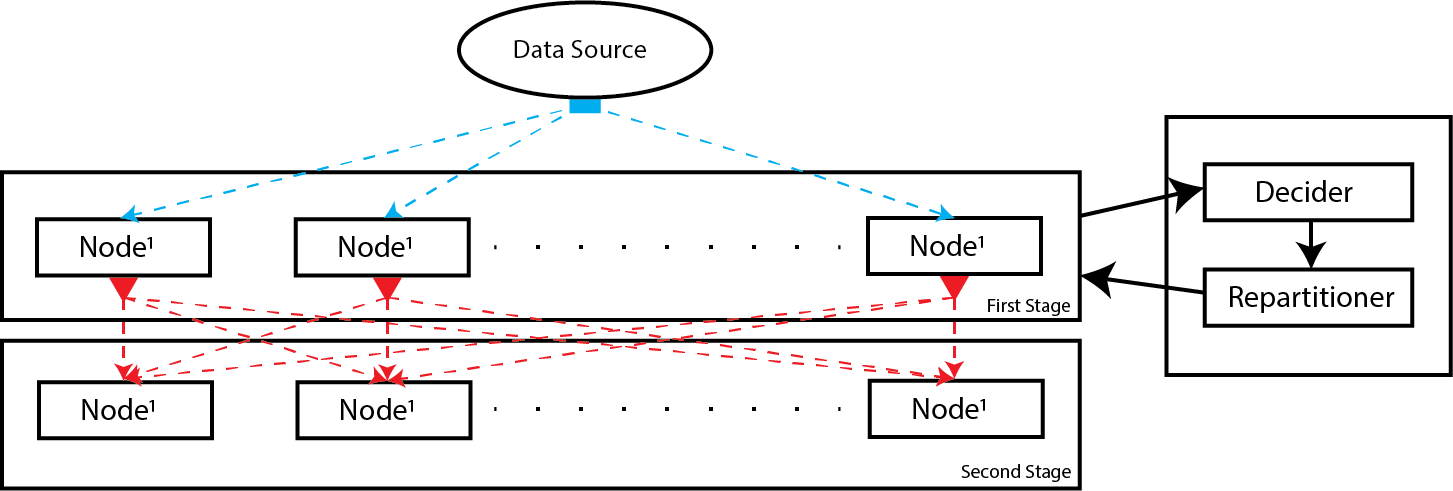}
    \caption{Flow of data in a distributed computation}
    \label{fig:sys_dig_rep_en}
\end{figure}
 
In real distributed systems incoming data is processed by multiple nodes.
The data arrives at the \textit{first stage} from a data source, such as Kafka, Couchbase, through a hash partitioner denoted by the blue square.
The nodes in the \textit{first stage} then shuffle this data by a grouping, which is a frequently used operation and could be the result of group-by or join operations.
The \textit{second stage} runs calculations on the shuffled \textit{data stream}.

In our framework, we simulate a distributed system with multiple nodes by starting multiple sampling algorithms.
We process the \textit{data stream} in micro-batches and for each element of the micro-batch, we select a sampling algorithm with a hash partitioner to feed that element to.
After each micro-batch, we gather the outputs of the sampling processes and aggregate them.
Based on the aggregated output, we determine whether repartitioning is necessary with a decider strategy.
If so, we create a new partitioning algorithm for the imbalance measurement, which will be used by the next micro-batch.

The calculation of the load imbalance happens during each micro-batch.
A naive approach would be to calculate the partition loads from the reported data of the sampling processes, however, this would allow cheating, the sampling algorithm could report falsified data and produce perfect results.
To make sure this does not happen, we calculate the partition loads during the micro-batch processing.
We determine this load by counting the actual number of elements that would be shuffled to certain partitions if we were to continue the computation.
If repartitioning happens after processing a micro-batch, the load imbalance caused by the new partitioning algorithm will be calculated during the following micro-batch.

To rank multiple sampling algorithms, we only compare test cases that have the same \textit{data stream}, number of nodes, decider strategy, repartitioner, and micro-batch size.

\subsection{Optimizer}
\label{section:opt}

We provide a framework for optimization of sampling algorithms in which a new optimization strategy is easy to implement.

The optimization process works the following way:
\begin{enumerate}
    \item The initial population is selected as a starting point.
    \item The initial population is benchmarked 
    \item In the selection process, the \textit{selectors} are used to thin out the population based on their fitness.
    \item In the evolution phase, a new population is generated based on the previous generation's survivors.
    \item The survivors are added to the new population 
\end{enumerate}

\begin{figure}[H]
    \centering
    \includegraphics[width=0.5\linewidth]{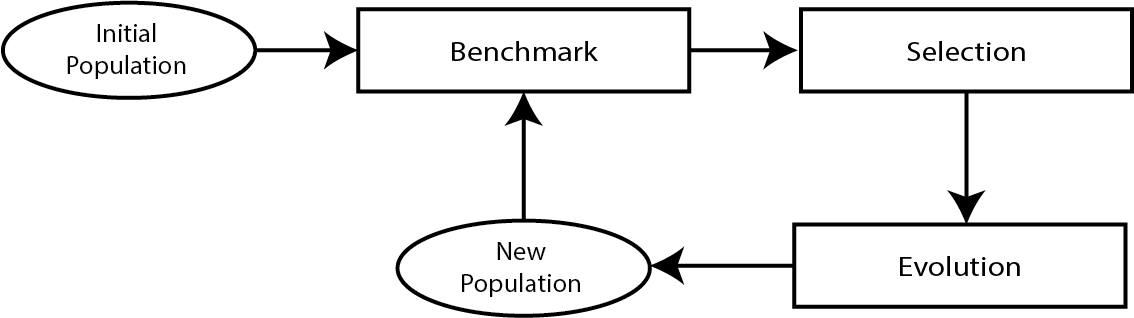}
    \caption{State diagram of our optimization process.}
\end{figure}

The optimization strategy we provide is a local minimum search, in which we choose the member of the population with the best fitness value in the selection process.

The hyperspace of an algorithm could be high dimensional.
Every additional parameter increases the number of neighbours exponentially.
Therefore, calculating all possible neighbours of an algorithm is infeasible and instead, we generate a subset of them and only include those in the population.
Because of the complexity of the problem domain, the correct steps and ranges of the parameters are not universal.
For example, a ''probability'' parameter only makes sense, if it is between $0$ and $1$.
Furthermore, it often does not make sense to use the whole domain of the parameter as the range.
For example, a ''window size'' parameter should not be bigger than the stream length and should have a sensible minimum size as well.

To evaluate these algorithms with specific hyperparameters, fitness functions are used by the optimizer.
They can be defined by the use of the metrics, such as accuracy, run time and memory usage.

If the diversity of the data stream is low, overfitting could occur during the optimization process.
To avoid this multiple \textit{data streams} are used during optimization, and also allow different kinds of bursts to be introduced to the \textit{data streams}.

\section{Our Algorithms}
\label{section:solution}

\subsection{Temporal Smoothed}

Our first algorithm is called \textit{Temporal Smoothed}. This algorithm is inspired by the \textit{Landmark}, and it aims to solve a flaw in it. In the \textit{Landmark}, when the sampler is asked to report shortly after a landmark, the reported output is too small in size and may not show an accurate distribution in its sampled keys.

Our algorithm, like the \textit{Landmark}, works with windows (\textit{threshold}), but rather than resetting the whole inner state of the sampling after each window, it starts a new sampling process instead. It maintains the original sampler (\textit{main sampler}) and the newly created one (\textit{secondary sampler}) for a predetermined window size (\textit{switch threshold}). During this window, the incoming data is sampled by both of them and only the main sampler's results are being reported. After the window is processed, the \textit{secondary sampler} gains a stable size and becomes the new \textit{main sampler}.

\SetKwInOut{Parameter}{Parameter}
\SetKwInOut{Input}{Input}
\SetKwInOut{Output}{Output}

\begin{algorithm}
  \footnotesize
  \Input{key: Key}
  \Parameter{$ms$: SamplerBase, \textit{main sampler}}
  \Parameter{$ss$: SamplerBase, \textit{secondary sampler}}
  \Parameter{$t$: $\mathbb{N}^+$, \textit{threshold}}
  \Parameter{$st$: $N^+$, \textit{switch threshold}}
  \BlankLine
    $totalProcessedElements \longleftarrow totalProcessedElements + 1$\;
    ms.recordKey(key)\;
    \If{ss initialized}{
        ss.recordKey(key)\;
    }
    \uIf{ss not initialized and $ms.totalProcessedElements = t + st$}{
        initialize ss\;
    }
    \ElseIf{ss initialized and $ss.totalProcessedElements = st$}{
        $ms \longleftarrow ss$\;
        discard ss\;
    }
  \caption{TemporalSmoothed}
\end{algorithm}

The \textit{TemporalSmoothed} algorithm can encapsulate any sampling algorithm.
As a result, the memory usage and run time heavily depend on which algorithm is encapsulated.

\begin{theorem}[Memory usage of \textit{Temporal Smoothed}]
    Assuming the chosen sampling algorithm has a maximum memory usage of $f(n)$ where $n$ is the current length of the stream, the \textit{TemporalSmoothed} algorithm's maximum memory usage will be:
    $$f(2 \cdot st + t) + f(st)$$
\end{theorem}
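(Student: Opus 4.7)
The plan is to track the state of the algorithm over time and identify the peak combined memory of the \textit{main sampler} ($ms$) and \textit{secondary sampler} ($ss$) that can be alive simultaneously. Since the algorithm maintains at most these two encapsulated samplers at any point, and each sampler's memory after processing $n$ keys is at most $f(n)$ (which we take to be non-decreasing, as is standard for such bounds), it suffices to bound, for every reachable configuration, how many keys each of $ms$ and $ss$ has recorded.

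First I would read off the life cycle of $ss$ from the pseudo-code: $ss$ is created precisely when $ms$'s counter reaches $t + st$, and it is promoted to become the new $ms$ (and then discarded) precisely when $ss$'s counter reaches $st$. Because every key that arrives while $ss$ is initialized is fed to \emph{both} samplers, their counters advance in lock-step, so during coexistence $ms$'s counter runs from $t + st$ up to $2 \cdot st + t$ while $ss$'s counter runs from $0$ up to $st$. In particular, just before the swap the two samplers together use at most $f(2 \cdot st + t) + f(st)$ memory, establishing the upper bound at that moment.

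Next I would verify that no earlier configuration exceeds this bound, by a case split on the three possible states: (i) only $ms$ alive, where its counter is at most $t + st$ by the initialization condition, giving memory at most $f(t + st) \le f(2 \cdot st + t) + f(st)$; (ii) both $ms$ and $ss$ alive, handled above via monotonicity of $f$; (iii) the instant immediately after a swap, where only the new $ms$ (inherited from $ss$) is present with size at most $f(st)$. I would also remark that the very first cycle, in which $ms$ begins empty rather than at size $f(st)$, still produces its peak just before the first swap at the same configuration, so start-up and steady state coincide.

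The main obstacle I anticipate is a careful off-by-one bookkeeping at the boundary instants: the equality tests in the pseudo-code interact with the per-key loop order of ``record, then check,'' so one has to be precise about whether the key arriving when $ss$ is initialized is actually recorded in $ss$, and whether the swap happens strictly before or after the final recording of the triggering key. A disciplined state-based case split, combined with the lock-step observation, resolves this and in fact shows that the bound is attained at the instant just before each swap, so it is tight.
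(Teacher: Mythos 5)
Your proposal is correct and follows essentially the same route as the paper's proof: both rest on the monotonicity of $f$ and on counting that the \textit{main sampler} processes $2 \cdot st + t$ keys over its lifetime while the coexisting \textit{secondary sampler} processes at most $st$, giving the sum $f(2 \cdot st + t) + f(st)$. Your explicit three-state case split and the remark about the first cycle are a slightly more careful presentation of the same argument, not a different one.
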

\begin{proof}
    First, it has to be stated the $f$ function is monotone increasing, since the maximum memory usage of a stream cannot decrease with the increase of the stream length.
    Secondly, with the help of \autoref{fig:tmp_lem_1_proof}, it can be shown that the \textit{main sampler} during its lifetime will sample $2 \cdot st + t$ times, which means that the maximum memory usage for the \textit{main sampler} will be $f(2 \cdot st + t)$.
    During the \textit{main sampler}'s lifetime, the \textit{secondary sampler} samples $st$ times, which means a maximum memory usage of $f(st)$.
\end{proof}

\begin{figure}[H]
    \centering
    \includegraphics[width=0.5\linewidth]{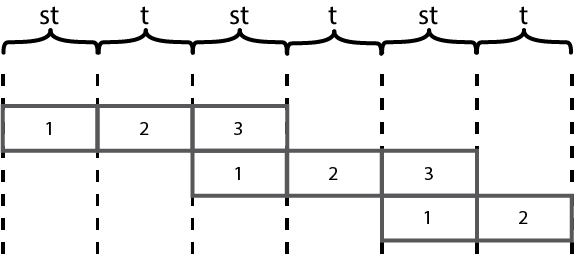}
    \caption{\textit{Temporal Smoothed} phases}
    \label{fig:tmp_lem_1_proof}
\end{figure}

Assuming the chosen sampling algorithm can initialize itself and sample a key in $\mathcal{O}(1)$ time, then the \textit{Temporal Smoothed} can also sample a key in $\mathcal{O}(1)$ time.
This assumption is not unreasonable, the sampling process has to sample large amounts of data in real-time.

\begin{theorem}[\textit{Temporal Smoothed} drift detection]
    \label{lem:tmp_lem_1}
    The \textit{Temporal Smoothed} algorithm can detect a change in the stream after sampling less then $t + 2 \cdot st$ number of times.
\end{theorem}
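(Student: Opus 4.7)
The plan is to define a drift as ``detected'' at the first index after $\tau$ where the main sampler holds only post-drift keys (so the reported sample then reflects the new concept), and to bound the worst-case number of stream elements between the drift at $\tau$ and this moment by a case analysis on the algorithm's internal state at $\tau$. At $\tau$ the state falls into one of two regimes: either the secondary sampler has not yet been initialized, or it has.

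First I would handle the case where $ss$ is uninitialized at $\tau$. Since $ms$ was just refreshed from the previous $ss$ phase, we have $ms.\textit{count} = j$ for some $j$ with $st \le j \le t + st - 1$. The next $t + st - j$ keys bring $ms.\textit{count}$ up to the trigger value $t + st$ and initialize $ss$; then $st$ further keys fill $ss$ entirely with post-drift data, at which point $ss$ replaces $ms$ and the new $ms$ is clean. The total wait is $t + 2 \cdot st - j \le t + st$.

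Second I would handle the case where $ss$ is already initialized at $\tau$ with $ss.\textit{count} = k$ for some $0 \le k \le st - 1$. If $k = 0$, the next $st$ keys are all post-drift and the handover immediately yields a clean $ms$, so detection costs $st$ samples. The delicate sub-case is $k \ge 1$: after $st - k$ more samples $ss$ replaces $ms$, but the new $ms$ still carries $k$ pre-drift keys inherited from $ss$, so the analysis must continue through another partial cycle of $t$ samples (until the next $ss$ is initialized) plus a further $st$ samples (until that fresh $ss$, now populated only with post-drift keys, takes over). The total is $(st - k) + t + st = t + 2 \cdot st - k \le t + 2 \cdot st - 1$.

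The main obstacle I expect is exactly this $k \ge 1$ sub-case: a careless argument would terminate at the first $ss$-to-$ms$ handover after $\tau$ and miss the fact that the new $ms$ is contaminated by the $k$ pre-drift keys $ss$ captured before the drift, so one must explicitly track the contents of $ss$ at $\tau$ and follow the algorithm through a second handover. Once this is accounted for, combining both cases shows that the current $ms$ is replaced by a post-drift-only sampler within strictly fewer than $t + 2 \cdot st$ samples after $\tau$, which is the stated bound.
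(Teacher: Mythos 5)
Your proof is correct and follows essentially the same route as the paper's: both identify detection with the moment the first sampler containing only post-drift keys becomes the \emph{main sampler}, and both bound the delay by a case analysis on where the drift falls in the sampler lifecycle (your state-based split --- $ss$ uninitialized versus $ss$ initialized with $k$ pre-drift keys --- is the paper's phase-two versus phase-one/three split in different clothing). Your explicit tracking of the counters $j$ and $k$ in fact pins down the strict inequality $< t + 2 \cdot st$ more cleanly than the paper's ``no more than $t + 2 \cdot st$'' phrasing; the only case you leave implicit is the very first sampler (where $j$ may be below $st$), which the paper likewise dismisses only parenthetically.
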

\begin{proof}
    In the algorithm a newly created sampler has three phases (\autoref{fig:tmp_lem_1_proof}).
    
    \begin{itemize}
        \item In phase one, the sampler takes the role of the \textit{secondary sampler}. The length of this phase is $st$ samples. (The exception to this is the very first sampler which already starts as \textit{main sampler})
        \item In phase two, the sampler becomes a \textit{main sampler}. The length of this phase is $t$ samples.
        \item In phase three, the sampler is still the \textit{main sampler}, but at the end of this phase, it gets discarded and replaced. This phase overlaps the newly created \textit{secondary sampler}'s phase one. The length of this phase is $st$ samples.
    \end{itemize}
    
    By examining the phase in which the change in the stream happens, we can determine the following cases:
     \begin{enumerate}
        \item The change happens exactly at the start of phase one. The change will be detected when the sampler becomes the \textit{main sampler}, which is in $st$ samples. (This cannot happen for the very first sampler, because the starting point of the change and the stream would overlap)
        \item The change happens in phase one or two. The sampler may not correctly detect the change since it has samples from before the change and after it. On the other hand, the next \textit{main sampler} will only have samples from after the change and can correctly detect it. This will happen in no more than $t + 2 \cdot st$ samples, which is the upper estimate for the replacement of the current \textit{main sampler}.
        \item The change happens in phase three. This case is covered by the first two cases because this coincides with the phase one of the current \textit{secondary sampler}.
    \end{enumerate}
\end{proof}

\subsection{Checkpoint Smoothed}

Our next algorithm is called \textit{Checkpoint Smoothed}. This algorithm works similarly to \textit{Temporal Smoothed}, but rather than periodically and rigidly renewing the \textit{main sampler}, it aims to only replace it if we are certain enough that the \textit{main sampler}'s reported results are not accurate due to a possible change in the stream. 

Our algorithm works with a \textit{main sampler} and a \textit{secondary sampler}. After our main sampler samples enough data (\textit{checkpoint window}), a new \textit{secondary sampler} is initialized.
The two samplers sample concurrently for another window (\textit{check threshold}).
After this, the reported results of the \textit{main sampler} and \textit{secondary sampler} gets compared using \textit{Hellinger distance}.
If the result is beyond a predetermined threshold (\textit{error threshold}), the \textit{main sampler} is replaced by the \textit{secondary sampler}.
We repeat the aforementioned process, either with the original \textit{main sampler} or the new one.

\begin{algorithm}
  \footnotesize
  \Input{key: Key}
  \Parameter{$ms$: SamplerBase, \textit{main sampler}}
  \Parameter{$ss$: SamplerBase, \textit{secondary sampler}}
  \Parameter{$cw$: $\mathbb{N}^+$, \textit{checkpoint window}}
  \Parameter{$ct$: $\mathbb{N}^+$, \textit{check threshold}}
  \Parameter{$et$: $R^+$, \textit{error threshold}}
  \Parameter{$checkpoint$: $N^+$, helper variable}
  \BlankLine
    $totalProcessedElements \longleftarrow totalProcessedElements + 1$\;
    ms.recordKey(key)\;
    \If{ss initialized}{
        ss.recordKey(key)\;
    }
    \uIf{ss not initialized and $totalProcessedElements > checkpoint + cw$}{
        initialize ss\;
    }
    \ElseIf{ss initialized and $ss.totalProcessedElements > ct$}{
        $expected \longleftarrow ms.estimateRelativeFrequencies$\;
        $actual \longleftarrow ss.estimateRelativeFrequencies$\;
        $result \longleftarrow H(expected, actual)$\;
        \If{$result > et$}{
        $ms \longleftarrow ss$\;
        }
        discard ss\;
        $checkpoint \longleftarrow totalProcessedElements$\;
    }
  \caption{CheckpointSmoothed}
\end{algorithm}

The \textit{Checkpoint Smoothed} algorithm can also encapsulate any sampling algorithm.

\begin{theorem}[Memory usage of \textit{Checkpoint Smoothed}]
Assuming the chosen sampling algorithm has a maximum memory usage of $f(n)$ where $n$ is the current length of the stream, the \textit{Checkpoint Smoothed} algorithm's maximum memory usage will be:
$$f(n) + f(ct)$$
\end{theorem}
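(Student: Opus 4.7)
The plan is to bound the two samplers' memory contributions separately and add them, following the same strategy used in the proof for \emph{Temporal Smoothed}. First I would note, as before, that $f$ is monotone increasing: the maximum memory a sampling algorithm requires on a prefix of the stream cannot decrease as the prefix grows.

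Next I would bound the memory of the main sampler. The line $ms.recordKey(key)$ executes unconditionally for every incoming key, so the main sampler processes each element exactly once while it is alive. If it is never replaced, then after $n$ total elements it has recorded $n$ keys and its memory is at most $f(n)$. If a promoted secondary sampler takes over as the new main sampler, that new main sampler starts from a state built on a strict suffix of the stream (at most $ct$ elements) and then continues to grow by one recorded element per incoming key. In every case, the number of keys the current main sampler has ever processed is bounded by $n$, so by monotonicity of $f$ its memory is at most $f(n)$.

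Then I would bound the memory of the secondary sampler. It is initialized empty when the trigger $totalProcessedElements > checkpoint + cw$ fires, and it records one key per incoming key thereafter. The ElseIf branch discards (or promotes) it as soon as $ss.totalProcessedElements > ct$, so while a secondary sampler is present it has recorded at most $ct$ keys (up to a single off-by-one from the strict inequality, which I would absorb into the stated bound as is conventional here). By monotonicity, its memory footprint is therefore at most $f(ct)$. Summing the two contributions yields the claimed bound $f(n) + f(ct)$; during the intervals when no secondary sampler has been instantiated, the total reduces to $f(n)$, which is still dominated by the claimed bound.

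The one delicate point is the replacement event itself: at the instant the secondary sampler is promoted, I must be sure I do not double-count its state. This is fine because the promoted sampler has processed at most $ct \leq n$ elements, so its state is already absorbed into the new main sampler's $f(n)$ bound via monotonicity, and simultaneously the old secondary slot becomes empty. The main obstacle, and the only part that requires care, is spelling out this handover cleanly so that at every moment in time exactly one $f(n)$-bounded sampler and at most one $f(ct)$-bounded sampler are alive.
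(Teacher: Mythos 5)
Your proposal is correct and follows essentially the same route as the paper's proof: decompose the memory into the \emph{main sampler}'s contribution, bounded by $f(n)$, and the \emph{secondary sampler}'s contribution, bounded by $f(ct)$ since it is discarded once it has recorded roughly $ct$ keys, then sum using monotonicity of $f$. You are in fact somewhat more careful than the paper, which only argues the never-replaced case explicitly, whereas you also handle the promotion of a secondary sampler and the avoidance of double-counting at the handover.
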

\begin{proof}
    If there is too little or no change in the stream, the \textit{main sampler} will never be replaced.
    This means that the \textit{main sampler}'s maximum memory usage is $f(n)$.
    If the \textit{main sampler} never gets replaced, a new \textit{secondary sampler} is started periodically and will sample $ct$ amount of data.
    This means that the maximum memory usage of the \textit{secondary sampler} is $f(ct)$.
\end{proof}

The sampling time of a key for the \textit{Checkpoint Smoothed} algorithm depends on multiple things:
\begin{enumerate}
    \item The chosen sampling algorithm: As stated in the \textit{Temporal Smoothed} algorithms run time, it is not unreasonable to assume that a sampler can initialize and sample a key in $\mathcal{O}(1)$.
    \item To calculate the \textit{Hellinger distance} periodically, the relative frequencies of the \textit{main sampler} and \textit{secondary sampler} are needed.
    The run time is therefore tied to the size of the \textit{main sampler}'s and the \textit{secondary sampler}'s output.
    \item The calculation of the \textit{Hellinger distance} can be done in linear time based on the size of its inputs, therefore this does not increase the asymptotic run time.
\end{enumerate}
Based on these, the \textit{Checkpoint Smoothed} algorithm can sample a key in $\mathcal{O}(g(n) + g(ct))$ time.
Where $g(x)$ is the maximum number of relative frequencies calculated by the chosen sampling algorithm, given $x$ samples.

The asymptotic run time can be quite misleading, because the majority of times ($cw + ct - 1$ out of $cw + ct$ times) the run time will be $\mathcal{O}(1)$.
The algorithm's sample time can be improved upon.
For example, if we run the calculation of \textit{Hellinger distance} concurrently, we do not have to interrupt the sampling process.
This introduces a possible error, which is that we may calculate the errors for a \textit{main sampler} with $cw + \epsilon_1$ samples and for a \textit{secondary sampler} with $ct + \epsilon_2$ samples.
If the parameter $ct$ and $cw$ are reasonable in size this will not cause a significant change in the algorithm's output.

\section{Analysis}
\label{section:anal}

We use \textit{concept drifts} where the midpoint of the \textit{concept drift} is at the midpoint of the \textit{data stream}.
This is to give the algorithms as much time to react to the \textit{concept drifts} as they had to estimate the frequencies before the \textit{concept drift}.

All data sets consist of $5\,000\,000$ elements with a key set of size $100\,000$.
In our tests, we use micro-batches of size $30\,000$ and a \textit{top-K} value of $300$, because with the chosen Zipfian distributions this \textit{top-K} should include the keys with meaningful frequencies ($> 1\%$).

The number of test cases grows exponentially with each new algorithm and hyperparameter. Showing all of these would be impractical, because we have limited space, so instead we present only the relevant cases and provide all of our measurements on GitHub \footnotemark.
The measurements were made on a $6$ core, $12$ thread Ryzen $3600$ CPU clocked at $3.6$GHz with $16$GB of RAM.

\footnotetext{\url{https://github.com/g-jozsef/sampling-framework-aux}}

We only show our algorithms in this paper with \textit{Frequent} and \textit{Lossy Counting}.
\textit{Frequent} performed the best in the majority of our tests and only seems to have difficulties with bursts.
We use \textit{Lossy Counting} as the encapsulated sampling algorithm in \textit{Temporal Smoothed} and \textit{Checkpoint Smoothed}, due to its speed, low memory usage and accuracy.
\textit{Lossy Counting} is robust, it is resilient against data burst and is amongst the worst performers when it comes to reacting to \textit{concept drifts}.

We use the same parameters for \textit{Temporal Smoothed} and \textit{Checkpoint Smoothed} as for \textit{Landmark} to allow a fair comparison.
After running multiple optimizations over various datasets, we decided to use an error threshold of $0.2$ for \textit{Checkpoint Smoothed}.
\begin{itemize}
    \item \textit{Temporal Smoothed}: \textit{threshold}$ = 40000$, \textit{switch threshold}$ = 40000$
    \item \textit{Checkpoint Smoothed}: \textit{checkpoint window}$ = 40000$, \textit{check threshold}$ = 40000$, \textit{error threshold}$ = 0.2$
\end{itemize}

\begin{figure*}
  \centering
  \subfloat[gradual, $exp = 1$]{\includegraphics[width=0.33\linewidth]{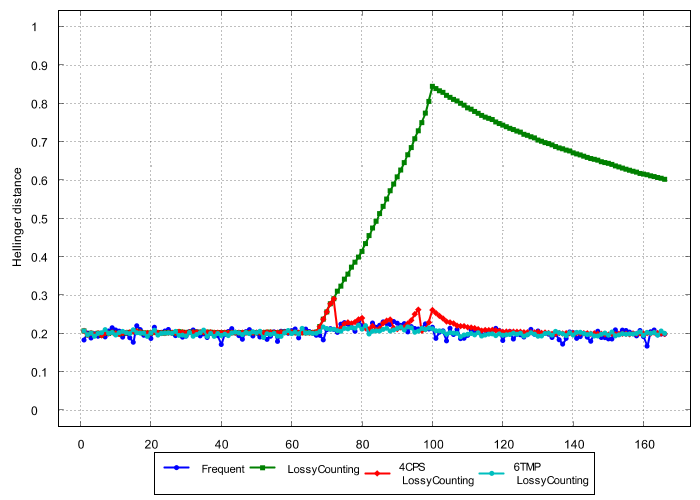}}
  \hfill
  \subfloat[gradual, $exp = 2$]{\includegraphics[width=0.33\linewidth]{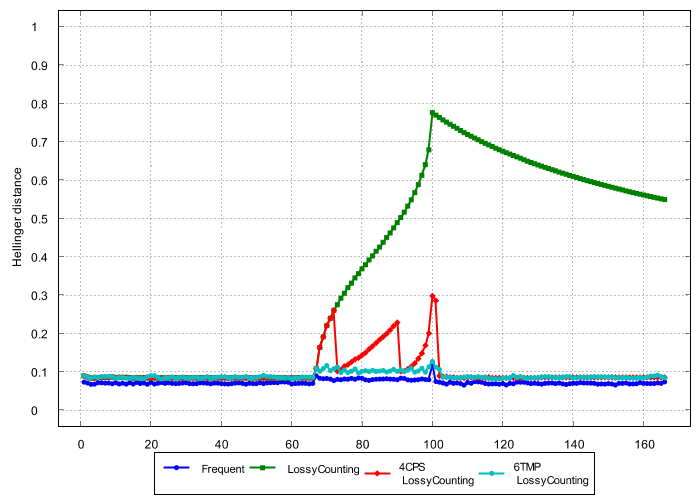}}
  \hfill
  \subfloat[gradual, $exp = 1$, light burst]{\includegraphics[width=0.33\linewidth]{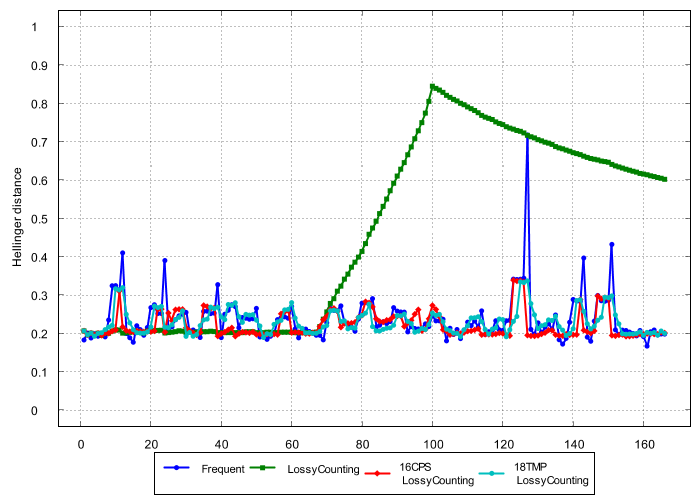}}
  \vfill
  \subfloat[gradual, $exp = 2$, light burst]{\includegraphics[width=0.33\linewidth]{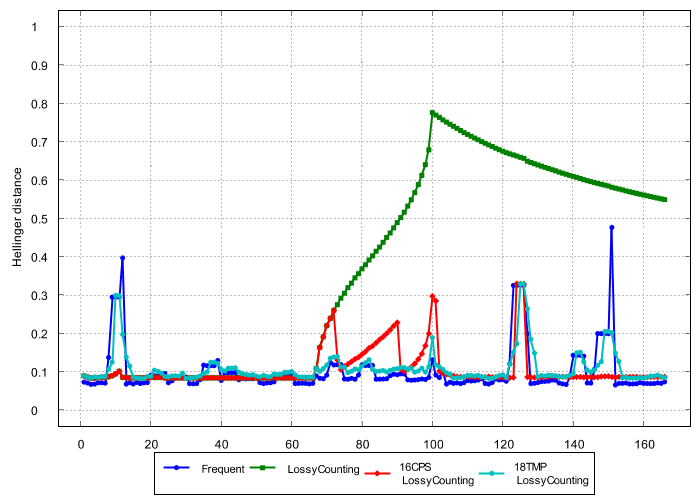}}
  \hfill
  \subfloat[gradual, $exp = 1$, heavy burst]{\includegraphics[width=0.33\linewidth]{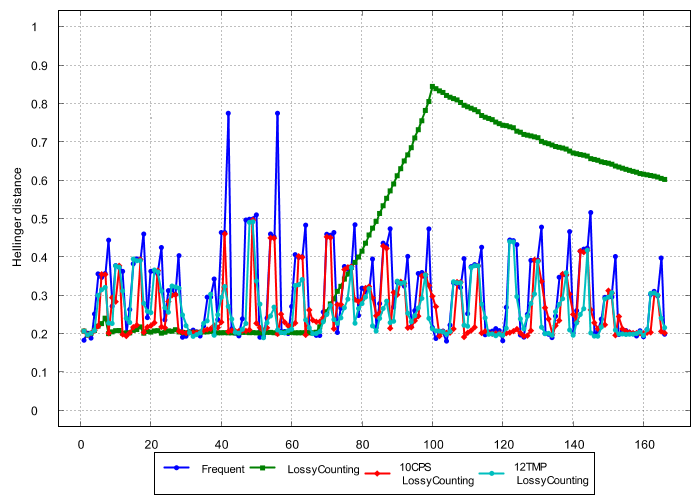}}
  \hfill
  \subfloat[gradual, $exp = 2$, heavy burst]{\includegraphics[width=0.33\linewidth]{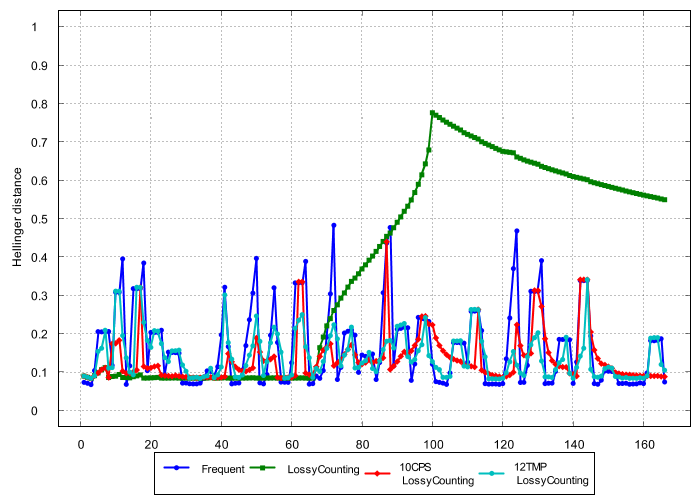}}
  \caption{\textit{Hellinger distance} between the sample distribution and true distribution at each micro-batch for \textit{Frequent}, \textit{Lossy Counting}, \textit{Checkpoint Smoothed} (\textit{CPS}) and \textit{Temporal Smoothed} (\textit{TMP}) without burst, and with light and heavy burst introduced.}
\end{figure*}

\textit{Temporal Smoothed} and \textit{Frequent} react the fastest to concept drifts, which is due to their windowed nature.
\textit{Checkpoint Smoothed} can be seen slightly inaccurate during the \textit{concept drift}, but it quickly corrects itself.
During the gradual drift, we can see multiple cuts made by \textit{Checkpoint Smoothed}, but after the concept drift it rapidly corrects itself.

With the introduction of light burst \textit{Checkpoint Smoothed} gives the best results on average because it is not hypersensitive to tiny changes.
\textit{Temporal Smoothed} and \textit{Frequent} however give better results between light bursts compared to \textit{Checkpoint Smoothed}.

With the introduction of heavy burst \textit{Checkpoint Smoothed} can better dampen the effects of bursts compared to \textit{Frequent} and \textit{Temporal Smoothed}.

\begin{figure*}
  \centering
  \subfloat[gradual, $exp = 1$]{\includegraphics[width=0.23\linewidth]{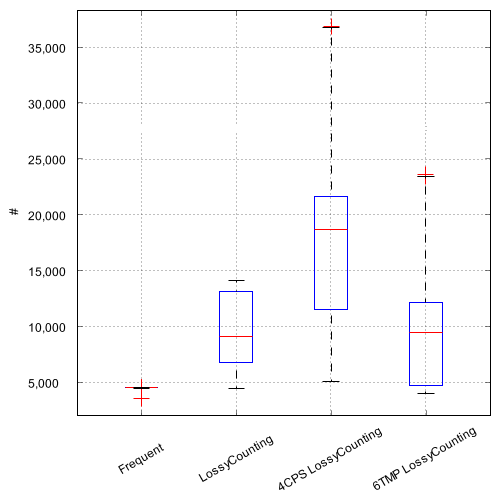}}
  \hfill
  \subfloat[gradual, $exp = 1$, heavy burst]{\includegraphics[width=0.23\linewidth]{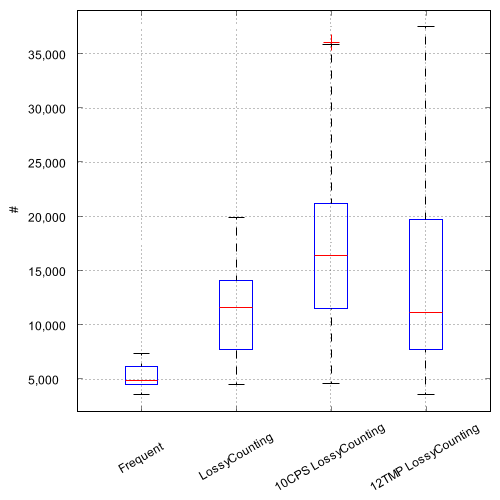}}
  \hfill
  \subfloat[gradual, $exp = 2$]{\includegraphics[width=0.23\linewidth]{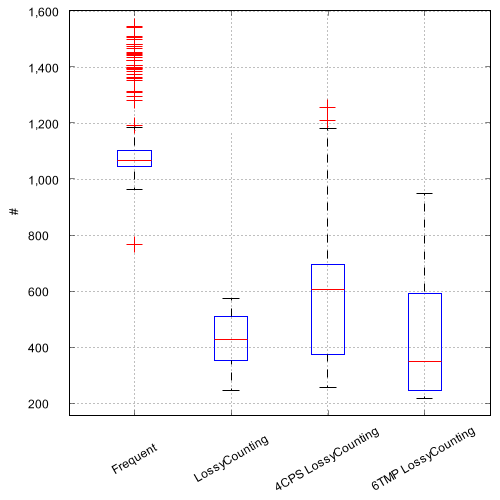}}
  \hfill
  \subfloat[gradual, $exp = 2$, heavy burst]{\includegraphics[width=0.23\linewidth]{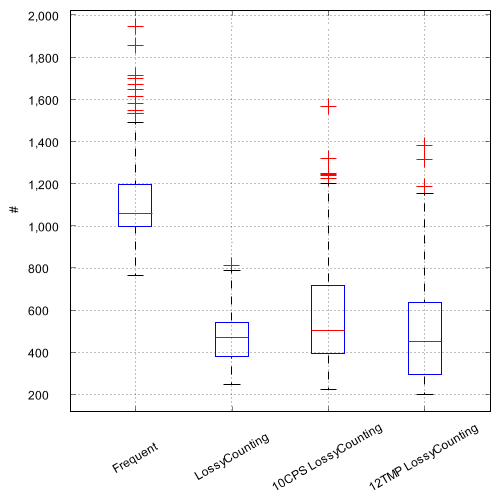}}
  
  \caption{Memory usage (number of counters) for \textit{Frequent}, \textit{Lossy Counting}, \textit{Checkpoint Smoothed} (\textit{CPS}) and \textit{Temporal Smoothed} (\textit{TMP}) algorithms with and without bursts with Zipfian distribution of $exp = 1$ and $exp = 2$}
\end{figure*}

We can observe the standard deviation of the memory usage increase when bursts are introduced.
\textit{Checkpoint Smoothed} uses almost twice as much memory than its base, \textit{Lossy Counting}.
\textit{Temporal Smoothed} uses similar amounts of memory compared to its base but with a greater deviation which becomes more apparent with bursts.

\begin{table}
    \footnotesize
    \centering
    \begin{tabular}{|p{1.75cm}|p{1.05cm}|p{1.05cm}|p{1.05cm}|p{1.05cm}|}
        \hline
            \textbf{Algorithm} & \multicolumn{4}{|c|}{\textbf{Run time}}\\
        \hline
            & gradual $exp=1$ & gradual $exp=1$ burst & gradual $exp=2$ & gradual $exp=2$ burst\\
        \hline
        \hline
            \textit{Frequent}
            & 1355 ms
            & 1168 ms
            & 546 ms
            & 421 ms\\
        \hline
            \textit{Lossy Counting}
            & 1493 ms
            & 1387 ms
            & 616 ms
            & 504 ms\\
        \hline
            \textit{Checkpoint Smoothed}
            & 2089 ms
            & 1854 ms
            & 868 ms
            & 760 ms\\
        \hline
            \textit{Temporal Smoothed}
            & 1979 ms
            & 1784 ms
            & 849 ms
            & 743 ms\\
        \hline
    \end{tabular}
    \caption{Run time for \textit{Frequent}, \textit{Lossy Counting}, \textit{Checkpoint Smoothed} and \textit{Temporal Smoothed} algorithms with and without bursts}
\end{table}

The introduction of bursts does not influence run time significantly.
We can see that our algorithms behave the same way as the state-of-the-art algorithms.
When the exponent becomes higher, the run time becomes lower.

We chose a partition number of $5$ to test our algorithms.
\begin{figure*}
  \centering
  \subfloat[abrupt, $exp = 1$]{\includegraphics[width=0.23\linewidth]{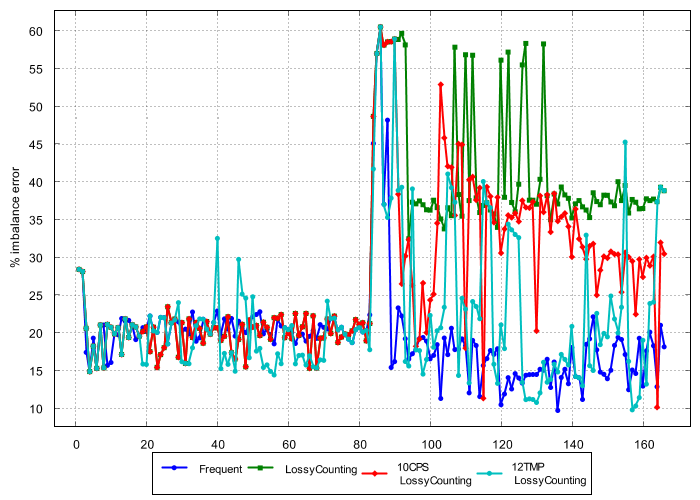}}
  \hfill
  \subfloat[abrupt, $exp = 1$, heavy burst]{\includegraphics[width=0.23\linewidth]{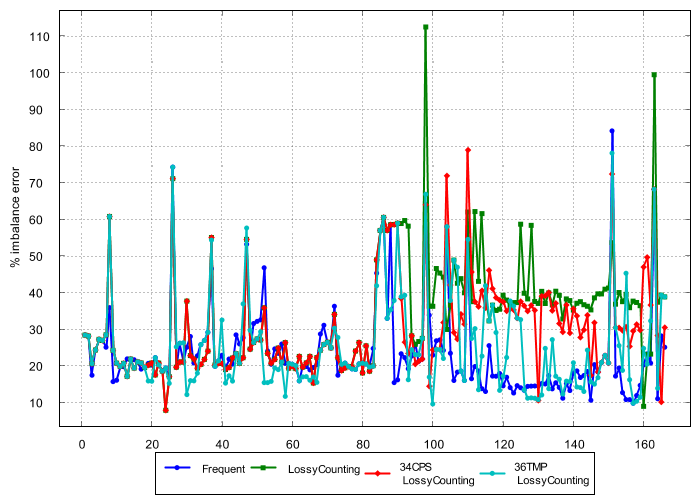}}
  \hfill
  \subfloat[gradual, $exp = 1$]{\includegraphics[width=0.23\linewidth]{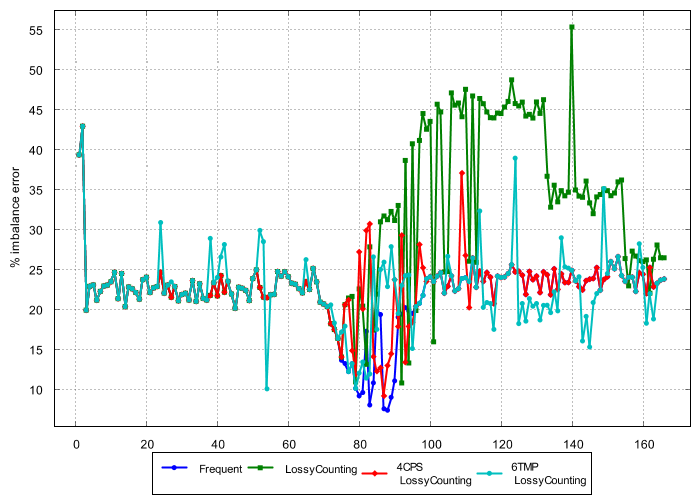}}
  \hfill
  \subfloat[gradual, $exp = 1$, heavy burst]{\includegraphics[width=0.23\linewidth]{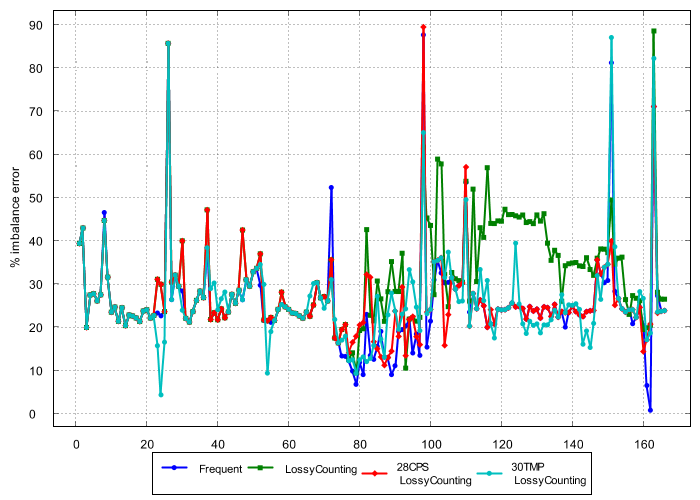}}
  \caption{Percent imbalance error for \textit{Frequent}, \textit{Lossy Counting}, \textit{Checkpoint Smoothed} (\textit{CPS}) and \textit{Temporal Smoothed} (\textit{TMP}) algorithms for abrupt and gradual \textit{concept drifts} with and without bursts with Zipfian distribution of $exp = 1$, using $5$ partitions.}
  \label{fig:tmp_cps_abrupt_exp_1_heavy_burst}
\end{figure*}

An abrupt \textit{concept drift} causes a sudden jump in the imbalance error, which is hard to recover from, but \textit{Temporal Smoothed} can do so quickly.
\textit{Checkpoint Smoothed} is slowly improving.
When heavy bursts are introduced the imbalance caused by the \textit{concept drift} is not significantly greater than before (please note the scale of the diagrams in \autoref{fig:tmp_cps_abrupt_exp_1_heavy_burst}).

The gradual \textit{concept drift} causes dipping in the imbalance error, which can be observed here as well.
The gradual \textit{concept drift} is easier to recover from, and both \textit{Temporal Smoothed} and \textit{Checkpoint Smoothed} can do so quickly.
They can also achieve similar results in the load imbalance in heavy burst scenarios compared to \textit{Frequent}.

In summary, we found that \textit{Temporal Smoothed} and \textit{Checkpoint Smoothed} can react fast to \textit{concept drifts}.
These algorithms improved the reaction time to \textit{concept drifts} compared to the encapsulated \textit{Lossy Counting} algorithm.
The cost of this improvement was the increase in memory usage and run time.

\section{Summary and Future Work}

In this work, we defined \textit{concepts} and \textit{concept drifts} of distributions for \textit{data streams}.
We approached \textit{concept drifts} from different perspectives and showed that the two approaches have the same expressive power, and used this to provide an intuitive way to define \textit{concept drifts} in \textit{data streams}.

We provided a data generator that can handle \textit{data bursts} and \textit{concept drifts}, and showed two methods of measuring the \textit{error of sampling algorithms}.
An optimizer was also included in the \textit{framework} with a random local minimum search that can be used to tune the \textit{hyperparameters} of algorithms.

We introduced two novel algorithms that could react to \textit{concept drifts}, and analyzed them in conjunction with the \textit{state-of-the-art} algorithms using our \textit{framework}.
In our analysis, we found that \textit{Frequent} reacts fastest to \textit{concept drifts}, and our algorithms also show good reaction times.
\textit{Checkpoint Smoothed} could achieve better results than \textit{Frequent} in heavy burst scenarios.

\subsection{Future Work}

We would like to mesure our algorithms on a real world dataset.

We conjecture that \textit{concepts} and \textit{concept drifts} can be extended to any number of underlying probability distributions.
\newpage
\printbibliography[heading=bibnumbered]

\begin{appendices}
\section{Drifts and concepts are equal in expressive power I.}
\label{appendix:section:equal_I}
\begin{proof}
    We construct such $D^{str}$, $\forall c = (c_s, c_e, c_f) \in Con^{str}$ and show that the construction is correct.
    \begin{enumerate}
        \item if $c$ is a changing concept, say $\exists P^c_1, P^c_2$ discrete probability distributions over $K$,
        $\forall i \in \mathcal{D}(c_f), c_f(i) = \Delta(\frac{(i -  c_s)}{c_e - c_s}, P^c_1, P^c_2)$, let 
        \[
            d := (c_e - c_s, \frac{c_e + c_s}{2}, P^c_1, P^c_2) \in D^{str}
        \]
        
        \item if $c$ is a constant concept, say $\exists P^c$ discrete probability distribution over $K$,
        $\forall i \in \mathcal{D}(c_f), c_f(i) = P^c$, let $P^c_x$ be an arbitrary discrete probability distribution over $K$ (it does not matter)
        \[
            d := (0, c_s, P^c_x, P^c) \in D^{str}
        \]
    \end{enumerate}
    For any $i \in [1, 2, \cdots, n], \exists! c = (c_s, c_e, c_f) \in Con^{str}: c_s \leq i \leq c_e$ (based on requirements for $Con^{str}$). We show that for any $c$ concept, our constructed corresponding $d$ drift is correct by showing that $P^{i, str}_G = P^{i, str}_T$
    \begin{enumerate}
        \item if $c$ is a changing concept, with $P^c_1, P^c_2$ discrete probability distributions over $K$,
        $\forall i \in \mathcal{D}(c_f), c_f(i) = \Delta(\frac{(i -  c_s)}{c_e - c_s}, P^c_1, P^c_2)$, from our construction
        \[
            \exists d = (c_e - c_s, \frac{c_e-c_s}{2}, P^c_1, P^c_2) \in D^{str}
        \]
        \[
            s(d) = \frac{c_e + c_s}{2} - \frac{c_e-c_s}{2} \leq i 
        \]
        \[
            i \leq e(d) = \frac{c_e + c_s}{2} + \frac{c_e-c_s}{2}
        \]
        reduces to
        \[
            c_s \leq i \leq c_e
        \]
        Therefore we can use our first generation rule for $d$ drift,
        \begin{equation*}
            \begin{split}
                P^{i, str}_G
                & = \Delta (\frac{(i - s(d))}{c_e - c_s}, P^c_1, P^c_2) \\
                & = \Delta (\frac{(i -  \frac{c_e + c_s}{2} - \frac{c_e-c_s}{2})}{c_e - c_s}, P^c_1, P^c_2) \\
                & = \Delta (\frac{(i -  c_s)}{c_e - c_s}, P^c_1, P^c_2)
                = P^{i, str}_T
            \end{split}
        \end{equation*}
        \item if $c$ is a constant concept, with $P^c$ discrete probability distribution over $K$,
        $\forall i \in \mathcal{D}(c_f), c_f(i) = P^c$
        
        The concepts are non-overlapping, so
        \[
            \forall c' = (c'_s, c'_e, c'_f) \in Con^{str}:
        \]
        \[
            \exists! d' \in D^{str}: c'_s = s(d) \leq e(d) \leq c_e
        \]
        $\exists P^c_x$ (an arbitrary discrete probability distribution over $K$),
        
        if $\exists d = (0, c_s, P^c_x, P^c) \in D^{str}: c_s + \frac{0}{2} = i$, we can use our second generation rule for $d$ drift, that is
        \[
            P^{i, str}_G = P^c = c_f(i) = P^{i, str}_T
        \]
        if $ \exists d = (0, c_s, P^c_x, P^c) \in D^{str}: c_s + \frac{0}{2} < i \land \nexists d' \in D^{str}: c_s + \frac{0}{2} < s(d') \leq i$, we can use our third generation rule for $d$ drift, that is
        \[
            P^{i, str}_G = P^c = c_f(i) = P^{i, str}_T
        \]
    \end{enumerate}
\end{proof}
\section{Drifts and concepts are equal in expressive power II}
\label{appendix:section:equal_II}
\begin{proof}
    We begin by constructing such $Con^{str}$, $\forall d = (d_{len}, d_{mid}, d_{P_1}, d_{P_2}) \in D^{str}$, \\
    
    \begin{enumerate}
        \item if $d_{len} > 0$, let
        \[
            (s(d), e(d), \Delta(\frac{i - s(d)}{d_{len}},d_{P_1},d_{P_2})) \in Con^{str}
        \]
        and if $\nexists d' \in D^{str}: s(d') = e(d) + 1$ and $e(d) \neq n$
        \begin{gather*}
            (e(d)+1, end_d, c_f) \in Con^{str} \\ \textit{, where }  \forall i \in \mathcal{D}(c_f): c_f(i) = d_{P_2} \textit{ and}
        \end{gather*}
        \[
            end_d = 
            \begin{cases}
                n,              & \textit{if } \nexists d' \in D^{str}: d'_{mid} > d_{mid}\\
                s(d') - 1,  & \textit{if } \exists d' \in D^{str}: d'_{mid} > d_{mid}\\
                                & \textit{and } \nexists d'' \in D^{str}: \\
                                & d_{mid} < d''_{mid} < d'_{mid}
            \end{cases}
        \]
         \item if $d_{len} = 0$, let
        \begin{gather*}
            (s(d), end_d, c_f) \in Con^{str} \\
            \textit{, where }  \forall i \in \mathcal{D}(c_f): c_f(i) = d_{P_2} \\
            \textit{, and } end_d \textit{ is the same as before.}    
        \end{gather*}
    \end{enumerate}
    For any $i \in [1, 2, \cdots, n]$, we show that for any $d$ drift, our constructed corresponding $c$ concept is correct by showing that $P^{i, str}_T = P^{i, str}_G$
    \begin{enumerate}
        \item if $\exists d = (len, mid, P_1, P_2): s(d) \leq i \leq e(d)$ and $len > 0$ then, based on the given construction
        \[
            \exists c = (s(d), e(d), \Delta(\frac{i - s(d)}{len}, P_1, P_2))
        \]
        \begin{equation*}
            \begin{split}
            P_T^{i, str} & = \Delta(\frac{i - s(d)}{e(d) - s(d)}, P_1, P_2) \\
            & =  \Delta(\frac{i - s(d)}{len}, P_1, P_2) = P_G^{i, str}
            \end{split}
        \end{equation*}
        \item if $\exists d = (len, mid, P_1, P_2): len > 0 \land e(d) < i \land \nexists d' \in D^{str}: e(d) < s(d') \leq i$ then,
         \begin{enumerate}
            \item if $\exists d' \in D^{str}: d'_{mid} > d_{mid}
            \land \nexists d'' \in D^{str}: d_{mid} < d''_{mid} < d'_{mid}$ then, based on the given construction
            \begin{gather*}
                \exists c = (e(d)+1, s(d') - 1, c_f) \in Con^{str} \\
                \textit{, where } \forall i \in \mathcal{D}(c_f): c_f(i) = P_2
            \end{gather*}
            \item if $\nexists d' \in D^{str}: d'_{mid} > d_{mid}$ then, based on the given construction
            \begin{gather*}
                \exists c = (e(d)+1, n, c_f) \in Con^{str}\\ 
                \textit{, where }  \forall i \in \mathcal{D}(c_f): c_f(i) = P_2
            \end{gather*}
        \end{enumerate}
         In both cases,
        \[
             P_T^{i, str} = c_f(i) = P_2 =  P_G^{i, str}
        \]
        \item if $\exists d = (0, d_{mid}, d_{P_1}, d_{P_2}), e(d) \leq i \land \nexists d' \in D^{str}: e(d) < s(d') \leq i$ then,
        \begin{enumerate}
            \item if $\exists d' \in D^{str}: d'_{mid} > d_{mid}
            \land \nexists d'' \in D^{str}: d_{mid} < d''_{mid} < d'_{mid}$ then, based on the given construction
            \begin{gather*}
                \exists c = (s(d), s(d') - 1, c_f)\\
                \textit{, where } \forall i \in \mathcal{D}(c_f): c_f(i) = d_{P_2}
            \end{gather*}
            \item if $\nexists d' \in D^{str}: d'_{mid} > d_{mid}$ then, based on the given construction
            \begin{gather*}
                \exists c = (s(d), n, c_f) \in Con^{str}\\
                \textit{, where } \forall i \in \mathcal{D}(c_f): c_f(i) = d_{P_2}
            \end{gather*}
        \end{enumerate}
        In both cases,
        \[
            P_T^{i, str} = c_f(i) = d_{P_2} =  P_G^{i, str}
        \]
    \end{enumerate}
\end{proof}
\end{appendices}

\end{document}